\documentclass[journal]{IEEEtran}

\usepackage{amsmath,amsfonts,amssymb,amsthm}
\usepackage{graphicx}
\usepackage{cite}
\usepackage{bm}
\usepackage{dsfont}
\usepackage[caption=false,font=footnotesize]{subfig}
\usepackage{siunitx}

\theoremstyle{plain}
\newtheorem{proposition}{Proposition}
\newtheorem{theorem}{Theorem}
\newtheorem{corollary}{Corollary}
\newtheorem{lemma}{Lemma}

\theoremstyle{definition}
\newtheorem{definition}{Definition}

\theoremstyle{remark}
\newtheorem{remark}{Remark}

\newcommand{\euler}{\mathrm{e}}
\renewcommand{\d}{\mathrm{d}}
\newcommand{\given}{\,|\,}
\DeclareMathOperator{\N}{N}
\DeclareMathOperator{\U}{U}
\DeclareMathOperator{\D}{D}
\DeclareMathOperator{\erf}{erf}

\def\X{{\mathcal X}}

\def\R{{\mathbb R}}

\def\E{{\mathbb E}}
\def\V{{\mathbb V}}
\def\P{{\mathbb P}}

\begin{document}

\title{Level Crossing Rate Bounds for Products of Independent Nonstationary Fading Processes}

\author{
    Plínio Santini Dester%
    \thanks{P. S. Dester (plinio.dester@unesp.br) is with the São Paulo State University (UNESP), School of Engineering, São João da Boa Vista, Brazil.}
}

\markboth{}
{Dester: Level Crossing Rate Bounds for Products of Independent
Nonstationary Fading Processes}

\maketitle

\begin{abstract}
    We develop a unified framework for the instantaneous total level crossing rate (LCR) of products of independent positive nonstationary fading processes. Universal upper and lower bounds are first derived from an exact decomposition of the product derivative and signed crossing imbalance, requiring only factor-wise crossing statistics.
    We then introduce the drifting Gaussian-derivative (DGD) class, in which each conditional derivative is Gaussian with possibly nonzero drift. For products of DGD factors, an exact Gaussian-mixture representation yields bounds that jointly capture aggregate drift and derivative fluctuations. The lower bound reduces under stationarity to a known \(\ell_2\) bound, whereas, to the best of the author's knowledge, the upper bound is new even in the stationary case.
    Furthermore, the bounds are asymptotically exact in drift-dominated regimes.
    The framework is specialized to nonstationary Double-Rayleigh fading and to quasi-identically distributed DGD factors, with closed forms for products of \(\alpha\)-\(\mu\) envelopes. Numerical results show that the DGD bounds closely bracket the exact LCR.
\end{abstract}

\begin{IEEEkeywords}
    Level crossing rate, nonstationary fading, product processes, cascaded
    fading channels, drifting Gaussian-derivative processes.
\end{IEEEkeywords}

\section{Introduction}
\label{sec:introduction}

First-order fading statistics, such as the probability density function
(PDF) and the outage probability, quantify the severity of signal
attenuation but do not describe how fading events evolve over time.
The level crossing rate (LCR), defined as the expected number of crossings
of a prescribed level per unit time, provides this complementary temporal
information. Together with the average fade duration, the LCR is relevant
to the characterization of error bursts, the design of interleavers and
coding schemes, and the evaluation of adaptive transmission mechanisms
\cite{rice1944mathematical,patzold1999level}.
The general analytical foundation is Rice's formula, which expresses the
crossing intensity in terms of the joint distribution of a process and its
time derivative \cite{rice1944mathematical}. Classical results, applications,
and analytical techniques for level-crossing problems are surveyed in
\cite{blake1973level}, while a rigorous modern treatment of Rice formulas
and level sets of random processes is provided in \cite{azais2009level}.

Multiplicative fading envelopes arise naturally when propagation involves
several successive or superimposed attenuation mechanisms. Important
examples include cascaded multihop channels and MIMO keyhole channels. The product structure has motivated
general cascaded fading models, such as the \(N*\)Nakagami model
\cite{karagiannidis2007cascaded}, as well as specific investigations of the
LCR and average fade duration of \mbox{Double-Nakagami-\(m\)} and multihop Rayleigh
processes \cite{zlatanov2008double,hadzivelkov2009multihop}. These studies
demonstrate the relevance of second-order statistics for product channels,
but also illustrate their analytical difficulty. Even when the constituent
processes have simple fading distributions, the product LCR generally
involves multidimensional integrals or distribution-specific
approximations.

Most available product-LCR formulations are developed under stationarity.
This assumption substantially simplifies the analysis because the upward
and downward crossing rates coincide and the conditional derivative of each
fading envelope has zero mean. However, stationarity may be restrictive in
high-mobility, large-array, and rapidly changing propagation environments,
for which channel measurements and models exhibit temporal, spatial, or
frequency nonstationarity
\cite{wang2018survey,cheng2022nonstationarity}. In a nonstationary setting,
the crossing rate must be characterized instantaneously. Moreover, upward
and downward crossings need not be balanced and their difference represents
the instantaneous probability flux through the considered level \cite{azais2009level}. For a
product process, the drifts of the constituent factors may reinforce or
cancel one another, making a direct extension of stationary results
nontrivial.

The instantaneous LCR of particular nonstationary fading processes has
received comparatively limited attention. P{\"a}tzold \emph{et al.}
considered a nonstationary Rayleigh envelope defined as the magnitude of a
non-wide-sense-stationary complex Gaussian process with time-varying
second-order statistics \cite{patzold2017nonstationary}. By applying the
time-dependent Rice formula, they obtained a closed-form expression for the
instantaneous LCR and specialized it to a sum-of-chirps multipath model
capturing variations in the mobile velocity. More broadly, physically
motivated nonstationary fading models have been developed using
time-varying geometries and random mobile trajectories
\cite{borhani2017trajectory}. These studies establish the relevance of
instantaneous crossing statistics in nonstationary propagation
environments, but remain tied to particular channel constructions and
single fading envelopes. In contrast, a general treatment of the LCR of
products of independent nonstationary processes has not been available.

Recently, sharp LCR inequalities were obtained for products and ratios of independent
stationary processes, together with an exact formulation when the
constituent conditional derivatives are Gaussian
\cite{dester2025level, dester2026fundamental}. The stationary theory relates the product LCR to
averaged contributions of the individual factors and leads to concise
\(\ell_\infty\)-, \(\ell_2\)-, and \(\ell_1\)-norm bounds. Nevertheless,
stationarity eliminates the drift terms that govern the crossing imbalance.
Consequently, new arguments are required to characterize products of
independent nonstationary processes and to quantify the combined effects
of deterministic drift and random derivative fluctuations.

In many classical fading models, the envelope derivative conditioned on the
current envelope is Gaussian. Their usual stationary formulations correspond
to zero conditional drift, whereas time-varying deterministic scaling may
introduce a nonzero conditional drift while preserving the conditional
Gaussian structure. 
This motivates the drifting Gaussian-derivative (DGD) class considered in
this paper. More broadly, the class encompasses many classical fading and
shadowing processes generated from differentiable Gaussian processes,
including, under their standard temporal constructions, Rayleigh, Rice,
Nakagami-\(m\), Weibull, \(\alpha\)-\(\mu\), \(\kappa\)-\(\mu\), and
lognormal models \cite{yacoub2007alpha,yacoub2007kappa,patzold2010shadow}.

In this paper, we develop a general framework for the instantaneous total
LCR of a product of independent positive nonstationary processes. The main
contributions are summarized as follows.

\begin{enumerate}
    \item We derive universal upper and lower bounds for the
    total LCR of a product process. The signed crossing imbalance is shown
    to decompose exactly into the sum of averaged factor contributions.
    The upper bound is the sum of the averaged individual LCRs, whereas the
    lower bound contains an explicit drift-correction term that captures
    reinforcement or cancellation among the remaining factors. In the
    stationary specialization, these results recover the
    \(\ell_\infty\)- and \(\ell_1\)-norm bounds of
    \cite{dester2025level}.

    \item We introduce the DGD class, in which the derivative of each envelope, conditioned on its current value, is Gaussian with a possibly nonzero state-dependent mean.
    Under this class, we derive a lower bound that
    combines the aggregate conditional drift with the Euclidean (or $\ell_2$) norm of the
    averaged derivative spreads. In the stationary case, the drift vanishes
    and the result reduces to the \(\ell_2\) lower bound of
    \cite{dester2025level}. We also derive a tractable upper bound from the
    folded-normal inequality
    \[
        \E\!\left[|a+bZ|\right]
        \le
        \sqrt{a^2+\frac{2}{\pi}b^2},
        \qquad Z\sim\mathcal N(0,1).
    \]
    To the best of the author's knowledge, the resulting upper bound is new even in the stationary case.

    \item We specialize the framework to a nonstationary
    Double-Rayleigh channel with time-varying deterministic gains. We
    obtain an exact one-dimensional integral, closed-form expressions for
    the factor-wise contributions, explicit universal and DGD bounds, and
    an asymptotic characterization of their relative gap. We then consider
    quasi-identically distributed products of DGD factors with linear conditional
    drifts and a common derivative-spread shape. In this setting, the
    deterministic drifts combine through their algebraic sum, whereas the
    independent Gaussian derivative spreads combine through their
    Euclidean norm. Closed-form expressions are obtained for the product of $n$
    \(\alpha\)-\(\mu\) envelopes.
\end{enumerate}

For the considered Double-Rayleigh examples, the numerical results show
that the DGD bounds substantially tighten the universal bounds.
Moreover, the DGD bounds become asymptotically tight when the aggregate
drift dominates the derivative fluctuations. On the other hand, the universal bounds become asymptotically exact when there is a dominant LCR component.

The remainder of the paper is organized as follows.
Section~\ref{sec:prelim} introduces the instantaneous upward, downward, and
total crossing rates and the signed crossing imbalance.
Section~\ref{sec:universal_bounds} derives the universal nonstationary
product bounds.
Section~\ref{sec:DGD_Bound} presents the DGD exact formulation and bounds.
Section~\ref{sec:application} applies the results to the nonstationary
Double-Rayleigh model and to quasi-identically distributed products of DGD factors, for which we obtain closed forms for the product of $n$ $\alpha$-$\mu$ fading models.
Finally, Section~\ref{sec:conclusion} concludes the paper.

\textit{Notation:}
The symbol \(\triangleq\) denotes equality by definition.
For \(x\in\R\), \(x^+\triangleq\max\{x,0\}\),
\(x^-\triangleq\max\{-x,0\}\), and
\(\operatorname{sgn}(x)\) denotes the sign function, with
\(\operatorname{sgn}(0)=0\).
The symbols \(\P\), \(\E\), and \(\V\) denote probability measure, expectation operator, and variance operator, respectively.
For random variables $X$ and $Y$,
the PDFs of \(X\) and \((X,Y)\) are denoted by \(f_X\) and \(f_{X,Y}\),
respectively, and \(X\given Y\) denotes \(X\) conditioned on \(Y\).
We write \(X\sim\mathcal N(\mu,\sigma^2)\) to indicate that \(X\)
is normally distributed with mean \(\mu\) and variance \(\sigma^2\).
For real-valued functions \(f\) and \(g\), the notation
\({f(x)=\mathcal O(g(x))}\) as \(x\to a\) means that
\(\lvert f(x)/g(x)\rvert\) remains bounded in a neighborhood of the
relevant limit point \(a\).
The symbol \(\|\cdot\|_p\) denotes the standard \(\ell_p\) norm.
Calligraphic letters, such as \(\mathcal X\), denote stochastic processes,
whereas the corresponding capital italic letters, such as \(X(t)\), denote
their values at a fixed time, and a dot, as in \(\dot X(t)\), denotes differentiation with respect to time.
The math boldface font is used to denote vectors, e.g., a vector of real numbers $\mathbf{x} = (x_1,\dots,x_n)$ or a vector of random variables $\mathbf{X} = (X_1,\dots,X_n)$.

\section{Preliminaries}
\label{sec:prelim}

Throughout, every individual or product process entering an LCR expression is assumed to satisfy the following conditions.

\begin{definition}\label{def:rice-regular}
    Let ${\mathcal{X} = \{X(t): t\in T\}}$ be a positive real-valued continuous-time random process, where $T$ is an interval of $\R$. We say $\mathcal X$ is \textit{Rice-regular} when for every $t$ in the interior of $T$, $\mathcal{X}$ satisfies the following properties:
    \begin{enumerate}
        \item $\mathcal{X}$ has a.s. continuously differentiable sample paths;
        \item the densities of $X(t)$ and $\dot X(t)$ and the joint density of $(X(t),\dot X(t))$ exist with respect to the Lebesgue measure and are continuous with respect to $t$;
        \item for every crossing level $u>0$,
        \[
            \int_{\R} |\dot x| f_{X(t),\dot X(t)}(u,\dot x)\,\d \dot x < \infty .
        \]
    \end{enumerate}
\end{definition}

These assumptions contemplate practical fading scenarios.
However, for less restrictive assumptions regarding the existence of the expected number of level crossings, see Aza{\"\i}s and Wschebor~\cite[Section~3.2.3]{azais2009level}.

For Rice-regular processes, the upward, downward, and total instantaneous level crossing rates of $\mathcal X$ at level $u$ and time $t$ exist and are given by \cite[Section~3.2]{azais2009level}
\begin{align}
    \U_\X(u;t)
    &=
    \int_{0}^{\infty}
    \dot x\, f_{X(t),\dot X(t)}(u,\dot x)\,\d \dot x, \label{eq:U_def_int}\\
    \D_\X(u;t)
    &=
    \int_{-\infty}^{0}
    (-\dot x)\, f_{X(t),\dot X(t)}(u,\dot x)\,\d \dot x, \label{eq:D_def_int}\\
    \N_\X(u;t)
    &=
    \int_{-\infty}^{\infty}
    |\dot x|\, f_{X(t),\dot X(t)}(u,\dot x)\,\d \dot x. \label{eq:N_def_int}
\end{align}
Equivalently, for nontrivial levels with $f_{X(t)}(u)>0$,
\begin{align}
    \U_\X(u;t)
    &=
    \E\!\left[(\dot X(t))^+ \given X(t)=u\right] f_{X(t)}(u), \label{eq:U_def}\\
    \D_\X(u;t)
    &=
    \E\!\left[(\dot X(t))^- \given X(t)=u\right] f_{X(t)}(u), \label{eq:D_def}\\
    \N_\X(u;t)
    &=
    \E\!\left[|\dot X(t)| \given X(t)=u\right] f_{X(t)}(u), \label{eq:N_def}
\end{align}
Hence,
\begin{equation}
    \N_\X(u;t)=\U_\X(u;t)+\D_\X(u;t).
\end{equation}

We also define the signed crossing imbalance
\begin{align}
\Delta_{\mathcal X}(u;t)
&\triangleq
\U_{\mathcal X}(u;t)-\D_{\mathcal X}(u;t) \label{eq:delta_def}\\
&=
\int_{-\infty}^\infty
\dot x f_{X(t),\dot X(t)}(u,\dot x)\,\d \dot x \nonumber\\
&=
\E\!\left[\dot X(t)\given X(t)=u\right]f_{X(t)}(u).
\label{eq:delta_X}
\end{align}
Consequently,
\begin{align}
\U_{\mathcal X}(u;t)
=
\frac{\N_{\mathcal X}(u;t)+\Delta_{\mathcal X}(u;t)}{2}, \label{eq:U_from_NDelta}\\
\D_{\mathcal X}(u;t)
=
\frac{\N_{\mathcal X}(u;t)-\Delta_{\mathcal X}(u;t)}{2}.
\label{eq:D_from_NDelta}
\end{align}
Therefore, it is sufficient to characterize the total crossing rate
$\N_{\mathcal X}$ and the signed imbalance $\Delta_{\mathcal X}$.

\begin{remark}\label{rem:delta_flux}
    Under the regularity conditions permitting differentiation of the marginal
    probability,
    \begin{equation}
        \Delta_{\mathcal X}(u;t)
        =\frac{\partial}{\partial t}\P(X(t)>u),
        \label{eq:delta_cdf_flux}
    \end{equation}
    so \(\Delta_{\mathcal X}\) is the probability flux through \(u\). It vanishes
    for time-invariant marginals. In particular, stationary processes satisfy
    \[\U_{\mathcal X}=\D_{\mathcal X}=\N_{\mathcal X}/2.\]
\end{remark}

The following weak but simple lower bound is useful in drift-dominated regimes and ensures that \eqref{eq:U_from_NDelta} and \eqref{eq:D_from_NDelta} make sense.
\begin{proposition}
Let $\mathcal X$ be a Rice-regular process. Then,
\begin{align}
    \N_{\mathcal X}(u)
    &\ge
    \left|\Delta_{\mathcal X}(u)\right|, \quad u>0.
    \label{eq:flux_lower_bound}
\end{align}
\end{proposition}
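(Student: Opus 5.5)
The plan is to compare the two integral representations \eqref{eq:N_def_int} and \eqref{eq:delta_X} directly at a fixed time $t$ in the interior of $T$, and to apply the triangle inequality for integrals. Rice-regularity (item~3 of Definition~\ref{def:rice-regular}) guarantees that $\int_{\R} |\dot x|\, f_{X(t),\dot X(t)}(u,\dot x)\,\d\dot x$ is finite. Consequently $\dot x\, f_{X(t),\dot X(t)}(u,\dot x)$ is absolutely integrable, so $\Delta_{\mathcal X}(u;t)$ is a well-defined finite number given by the integral in \eqref{eq:delta_X}.

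The key step is then the one-line estimate
\begin{equation*}
\left|\Delta_{\mathcal X}(u;t)\right|
=\left|\int_{-\infty}^{\infty}\dot x\, f_{X(t),\dot X(t)}(u,\dot x)\,\d\dot x\right|
\le\int_{-\infty}^{\infty}|\dot x|\, f_{X(t),\dot X(t)}(u,\dot x)\,\d\dot x
=\N_{\mathcal X}(u;t).
\end{equation*}
This uses only the fact that the joint density is nonnegative, so that $|\dot x f| = |\dot x| f$. As an equivalent route, and as a sanity check, one can write $\N=\U+\D$ and $\Delta=\U-\D$ with $\U,\D\ge 0$ by \eqref{eq:U_def_int}--\eqref{eq:D_def_int}. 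Then $|\U-\D|\le \U+\D$ gives the claim immediately, and it also shows that the quantities in \eqref{eq:U_from_NDelta} and \eqref{eq:D_from_NDelta} are nonnegative.

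There is no genuine obstacle. The only point needing care is integrability, so that the difference $\U-\D$ is not of the form $\infty-\infty$, and this is exactly what Rice-regularity supplies. The statement suppresses $t$, so I would note that the inequality holds pointwise for every interior $t$. It also holds trivially at levels where $f_{X(t)}(u)=0$, since both sides then vanish by the integral forms.
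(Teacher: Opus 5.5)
Your proof is correct and is essentially the paper's argument: the paper applies Jensen's inequality in the form $|\E[\dot X\given X=u]|\le\E[|\dot X|\given X=u]$ and multiplies by $f_X(u)$, which is your triangle inequality for integrals written in normalized (conditional-expectation) form. Working directly with the integral representations, as you do, has the minor advantage of also covering levels with $f_{X(t)}(u)=0$, where the conditional-expectation form is not used.
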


\begin{proof}
Apply Jensen's inequality \cite{billingsley1986probability} to \(\E[|\dot X|\given X=u]\).
\end{proof}

\section{Universal LCR Bounds for the Product}
\label{sec:universal_bounds}

Let $\mathcal R_1,\ldots,\mathcal R_n$ be positive, mutually independent,
Rice-regular processes defined over the same time interval $T$ of $\R$.
Let $\mathcal R$ be the product of $\mathcal R_1, \dots, \mathcal R_n$, that is,
\begin{align*}
    R(t) \triangleq R_1(t) \cdots R_n(t).
\end{align*}

For $k=1,\ldots,n$, define $f_k\triangleq f_{R_k}$, $\N_k \triangleq \N_{\mathcal R_k}$, $\Delta_k \triangleq \Delta_{\mathcal R_k}$ and the product of all factors except the $k$th one as
\begin{equation}
    C_k(t)\triangleq \prod_{i\ne k}R_i(t).
\end{equation}
Thus, $R(t)=C_k(t)R_k(t)$ and $C_k(t)>0$.

For admissible $u$, define the averaged level-$u$ contribution of the $k$th factor to the total LCR and to the signed crossing imbalance of the product as
\begin{align}
    \overline{\N}_k(u;t)
    &\triangleq
    \E\!\left[
    C_k(t)\left|\dot R_k(t)\right|
    \given R(t)=u
    \right]f_{R(t)}(u), \label{eq:Nk}\\
    \overline{\Delta}_k(u;t)
    &\triangleq
    \E\!\left[
    C_k(t)\dot R_k(t)
    \given R(t)=u
    \right]f_{R(t)}(u), \label{eq:deltak}
\end{align}
respectively. The overline in $\overline{\N}_k$ and $\overline{\Delta}_k$ denotes averaging over the remaining factors; it is not a time average.

An equivalent definition, by independence, is
\begin{align}
    \overline{\N}_k(u;t)
    &=
    \E\!\left[
    \N_k\!\left(\frac{u}{C_k(t)};t\right)
    \right], \\
    \overline{\Delta}_k(u;t)
    &=
    \E\!\left[
    \Delta_k\!\left(\frac{u}{C_k(t)};t\right)
    \right].
\end{align}

From now on, all statements are pointwise in an admissible time $t$. Thus, when no ambiguity
arises, $t$ is omitted.

By sample-path differentiability, the product rule gives
\begin{align} \label{eq:Rdot}
    \dot R
    =
    \sum_{k=1}^n C_k\dot R_k.
\end{align}

Consequently, the signed crossing imbalance of the product
decomposes exactly into the sum of the averaged signed
contributions. Thus, from \eqref{eq:delta_X},
\begin{align} \label{eq:DeltaR}
    \Delta_{\mathcal R}(u)
    &=
    \sum_{k=1}^n \overline{\Delta}_k(u),
\end{align}
by linearity of conditional expectation.

Next, we present the theorems that provide universal lower and upper bounds for the LCR, built upon the LCRs of the constituent processes.
\begin{theorem} \label{th:general_upper_bound}
Let $\mathcal R_1, \dots, \mathcal R_n$ be independent Rice-regular processes.
If their product $\mathcal R$ is Rice-regular, then
    \begin{align*}
        \N_{\mathcal R}(u)
        &\le \sum_{k=1}^n \overline{\N}_k(u), \quad u>0.
    \end{align*}
\end{theorem}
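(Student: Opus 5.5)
The plan is to start from the conditional-expectation form \eqref{eq:N_def} of the total crossing rate of the product. Since $\mathcal R$ is assumed Rice-regular, for every $u>0$ with $f_R(u)>0$ we have
\begin{equation*}
    \N_{\mathcal R}(u)=\E\!\left[|\dot R|\given R=u\right]f_R(u).
\end{equation*}
If $f_R(u)=0$, then $\N_{\mathcal R}(u)=0$ by \eqref{eq:N_def_int}, and the right-hand side is nonnegative, so that case is trivial. We then substitute the product-rule decomposition \eqref{eq:Rdot}, $\dot R=\sum_{k}C_k\dot R_k$, which holds pathwise by sample-path differentiability. Next we apply the triangle inequality pointwise. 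Because $C_k>0$,
\begin{equation*}
    |\dot R|\le \sum_{k=1}^n C_k|\dot R_k| .
\end{equation*}

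Monotonicity and linearity of conditional expectation then give
\begin{equation*}
    \E\!\left[|\dot R|\given R=u\right]\le \sum_{k=1}^n \E\!\left[C_k|\dot R_k|\given R=u\right].
\end{equation*}
Multiplying by $f_R(u)$ and comparing with the definition \eqref{eq:Nk} of $\overline{\N}_k(u)$ yields the claim.

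The only delicate point is making the conditioning on the null event $\{R=u\}$ rigorous. For this I would work at the level of densities, which is equivalent to the conditional statement. For any nonnegative measurable $g$ and test function $\varphi\ge0$, the inequality $\E[|\dot R|\varphi(R)]\le\sum_k\E[C_k|\dot R_k|\varphi(R)]$ holds trivially. Disintegrating both sides with respect to $R$ gives the pointwise inequality between the functions $u\mapsto\E[|\dot R|\given R=u]f_R(u)$ and $u\mapsto\sum_k\E[C_k|\dot R_k|\given R=u]f_R(u)$. This holds for almost every $u$ a priori.

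To upgrade from almost every $u$ to every $u$, I would use the equivalent representation $\overline{\N}_k(u)=\E[\N_k(u/C_k)]$ together with continuity in $u$ of both sides. The left side is continuous as a Rice integral under the regularity assumptions. The right side is continuous by independence of $R_k$ and $C_k$ and by Rice-regularity of each factor. This continuity step is the main technical obstacle. If full continuity is awkward to establish, I would instead state the bound in the conditional-expectation sense in which the paper defines all quantities, where it follows directly from the pointwise triangle inequality.
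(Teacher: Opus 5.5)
Your argument matches the paper's proof: the pathwise product rule $\dot R=\sum_k C_k\dot R_k$, the triangle inequality using $C_k>0$, then monotonicity and linearity of conditional expectation given $R=u$, followed by multiplication by $f_R(u)$. Your extra disintegration and continuity step goes beyond the paper, which works directly with the conditional-expectation definitions \eqref{eq:N_def} and \eqref{eq:Nk}. Rice-regularity as defined only gives continuity in $t$, not in $u$, so your fallback of staying within that conditional-expectation framework is the option consistent with the paper's assumptions.
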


\begin{proof}
Using the triangle inequality in \eqref{eq:Rdot},
\[
    |\dot R|
    =
    \left|
    \sum_{k=1}^n C_k\dot R_k
    \right|
    \le
    \sum_{k=1}^n C_k|\dot R_k|,
\]
because $C_k>0$. Hence,
\begin{align}
    \N_{\mathcal R}(u)
    &\le
    \sum_{k=1}^n
    \E\!\left[
    C_k|\dot R_k|\given R=u
    \right]f_{R}(u) \nonumber
    =
    \sum_{k=1}^n \overline{\N}_k(u).
\end{align}
\end{proof}

\begin{theorem} \label{th:general_lower_bound}
Let $\mathcal R_1, \dots, \mathcal R_n$ be independent Rice-regular processes.
If their product  $\mathcal R$ is Rice-regular, then
    \begin{align*}
        \N_{\mathcal R}(u)
        &\ge \max_{1\le k \le n} 
        \left\{\overline{\N}_k(u) + d_k(u) \right\}, \quad u>0,
    \end{align*}
    where $d_k$ is the drift correction term
    \begin{align*}
        d_k(u)
        \triangleq
        \sum_{i\ne k}
        \E\!\left[
        C_i\dot R_i\operatorname{sgn}(\dot R_k)
        \given R=u
        \right]f_{R}(u).
    \end{align*}
\end{theorem}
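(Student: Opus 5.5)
The plan is to bound $|\dot R|$ from below pointwise by a linear functional that singles out the $k$th factor, and then take conditional expectations. For fixed $k$, use the elementary inequality $|x|\ge x\,s$, valid for any $s\in[-1,1]$. Choose $s=\operatorname{sgn}(\dot R_k)$ and apply it to $x=\dot R$ in the product-rule decomposition \eqref{eq:Rdot}. This gives
\[
    |\dot R|
    \ge
    \operatorname{sgn}(\dot R_k)\sum_{i=1}^n C_i\dot R_i
    =
    C_k|\dot R_k| + \sum_{i\ne k} C_i\dot R_i\operatorname{sgn}(\dot R_k).
\]
The identity $C_k\dot R_k\operatorname{sgn}(\dot R_k)=C_k|\dot R_k|$ uses $C_k>0$ together with the convention $\operatorname{sgn}(0)=0$. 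That convention is harmless here, since the term vanishes when $\dot R_k=0$.

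Next, take $\E[\,\cdot\given R=u]$ of both sides and multiply by $f_R(u)$. The left-hand side becomes $\N_{\mathcal R}(u)$ by \eqref{eq:N_def}, which applies because $\mathcal R$ is Rice-regular. The first term on the right becomes $\overline{\N}_k(u)$ by \eqref{eq:Nk}, and the remaining sum becomes $d_k(u)$ by linearity of conditional expectation. Since $k$ was arbitrary, taking the maximum over $k$ gives the stated bound.

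The main obstacle is integrability, which is needed for the linearity step and for each conditional expectation to be finite. Rice-regularity of $\mathcal R$ guarantees that $\E[|\dot R|\given R=u]$ is finite. For each $i$, finiteness of $\E[C_i|\dot R_i|\given R=u]f_R(u)=\overline{\N}_i(u)$ follows from independence and Rice-regularity of the factors through the representation $\overline{\N}_i(u)=\E[\N_i(u/C_i)]$. I would verify this by conditioning on the other factors and applying Fubini.

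Given these finiteness facts, each term $C_i\dot R_i\operatorname{sgn}(\dot R_k)$ is dominated in absolute value by $C_i|\dot R_i|$, so $d_k(u)$ is well defined. The case where $\overline{\N}_i(u)=\infty$ for some $i$ should be handled separately. Otherwise the proof reduces to the pointwise inequality above.

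For the case $f_R(u)=0$, I would argue directly from the integral forms \eqref{eq:N_def_int} rather than through conditional expectations. In that case both sides vanish and the inequality holds trivially.
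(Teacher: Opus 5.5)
Your proof is correct and is essentially the paper's argument: your inequality $|\dot R|\ge \operatorname{sgn}(\dot R_k)\,\dot R$ is the paper's $|a+b|\ge|a|+b\operatorname{sgn}(a)$ with $a=C_k\dot R_k$ (using $C_k>0$), followed by conditioning on $R=u$, multiplying by $f_R(u)$, and maximizing over $k$. Your extra remarks go beyond what the paper checks but do not change the argument; these are finiteness of $\overline{\N}_i(u)$, the possibly infinite case (which you flag rather than resolve), and the degenerate case $f_R(u)=0$.
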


\begin{proof}
Fix $k$ and write
\[
    \dot R
    =
    C_k\dot R_k
    +
    \sum_{i\ne k} C_i\dot R_i.
\]
Using
$
    |a+b|\ge |a|+b\operatorname{sgn}(a),
$
we obtain
\begin{align*}
    |\dot R|
    &\ge
    C_k|\dot R_k|
    +
    \sum_{i\ne k}
    C_i\dot R_i\operatorname{sgn}(\dot R_k).
\end{align*}
Taking conditional expectation given $R=u$ and multiplying by
$f_{R}(u)$ gives
$
    \N_{\mathcal R}(u)
    \ge
    \overline{\N}_k(u)+d_k(u).
$
Since this inequality holds for every $k$, maximizing over $k$ completes the proof.
\end{proof}

\begin{remark}
\label{rem:dominant_LCR_component}
The universal bounds become exact when one averaged component dominates the
others. This follows from
\[|d_k(u)| \le \sum_{i\neq k} \E\!\left[
        C_i |\dot R_i|
        \given R=u
        \right]f_{R}(u)
        = \sum_{i\neq k}\overline\N_i(u).\]
For a given $k$, if $\overline\N_i/\overline\N_k \to 0$ for every $i\neq k$, then the ratio of the bounds tends to 1.
\end{remark}

In Theorem~\ref{th:general_lower_bound}, the correction $d_k$ is the price of nonstationarity in the factor-wise lower bound. In general, one cannot replace the lower bound in Theorem~\ref{th:general_lower_bound} by $\max_k\overline{\N}_k$, because the drift of the remaining factors may either reinforce or cancel the motion of the $k$th factor at the crossing level.
On the other hand, if the processes are stationary the replacement can be made, which results in the following corollary.

\begin{corollary} \label{cor:general_stat_bounds}
    Let $\mathcal R_1, \dots, \mathcal R_n$ be independent stationary Rice-regular processes.
    If their product $\mathcal R$ is Rice-regular, then
    \begin{align*}
         \left\lVert \overline{\mathbf N}(u) \right\rVert_\infty \le \N_{\mathcal R}(u) \le \left\lVert \overline{\mathbf N}(u) \right\rVert_1,
    \end{align*}
    for every $u>0$, where $\overline{\mathbf N}(u) \triangleq \bigl(\overline\N_1(u), \dots, \overline\N_n(u)\bigr).$
\end{corollary}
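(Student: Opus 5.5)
The upper bound is immediate from Theorem~\ref{th:general_upper_bound}, since $\sum_k \overline{\N}_k(u)=\lVert \overline{\mathbf N}(u)\rVert_1$ (all components are nonnegative). For the lower bound, I will use Theorem~\ref{th:general_lower_bound} and show that, under stationarity, each drift correction satisfies $d_k(u)=0$. Then $\N_{\mathcal R}(u)\ge \max_k \overline{\N}_k(u)=\lVert\overline{\mathbf N}(u)\rVert_\infty$.

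To show $d_k(u)=0$, fix $k$ and $i\neq k$, and consider
\[
\E\!\left[C_i\dot R_i\operatorname{sgn}(\dot R_k)\given R=u\right]f_R(u).
\]
Write $C_i = R_k \prod_{j\ne i,k}R_j$. Conditioning on $R_k$ and the remaining factors, and using independence, I will express this term as an integral over the joint law of all factors against a delta at $R=u$. The key observation concerns the conditional expectation $\E[\dot R_i\given R_i]$, which for a stationary factor vanishes at every level. Indeed, by Remark~\ref{rem:delta_flux}, $\Delta_i(v)=\E[\dot R_i\given R_i=v]f_i(v)=\partial_t\P(R_i>v)=0$ for all $v>0$. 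By independence of $\mathcal R_i$ from $\{\mathcal R_j\}_{j\neq i}$, conditioning on $(R_1,\dots,R_n)$ and on $\dot R_k$, we have $\E[\dot R_i\given R_1,\dots,R_n,\dot R_k]=\E[\dot R_i\given R_i]=0$. Since $C_i$, $\operatorname{sgn}(\dot R_k)$ and the event $\{R=u\}$ are measurable with respect to $(R_1,\dots,R_n,\dot R_k)$, the tower property gives zero for each summand, so $d_k(u)=0$.

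The main obstacle is rigor in the tower step, because the conditioning event $\{R=u\}$ has probability zero. I would handle this by writing everything as densities:
\[
\E[g\given R=u]f_R(u)=\int g\, f\,\delta(\ldots).
\]
Concretely, I would change variables $r_i = u/c_i$, where $c_i=\prod_{j\neq i}r_j$, integrate first over $\dot r_i$ against $f_{R_i,\dot R_i}(u/c_i,\dot r_i)$, and obtain the factor $\Delta_i(u/c_i)=0$. The integrability condition in Definition~\ref{def:rice-regular} justifies Fubini. I also need the remark's flux identity, or equivalently the standard fact that a stationary Rice-regular process has $\E[\dot X\given X=v]f_X(v)=\partial_t\P(X(t)>v)=0$. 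I will cite it under the regularity assumed in Remark~\ref{rem:delta_flux}.
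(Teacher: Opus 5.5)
Your proposal is correct and takes essentially the same route as the paper: the upper bound is Theorem~\ref{th:general_upper_bound} directly, and the lower bound is Theorem~\ref{th:general_lower_bound} once every $d_k(u)$ vanishes because stationarity forces $\E[\dot R_i\given R_i]=0$. The paper gets this vanishing by conditioning on $\mathbf R$ and factoring $\E[C_i\dot R_i\operatorname{sgn}(\dot R_k)\given\mathbf R]=C_i\E[\dot R_i\given R_i]\E[\operatorname{sgn}(\dot R_k)\given R_k]$. That factorization is equivalent to your tower step through $(\mathbf R,\dot R_k)$, and also to your density computation giving $\E[\operatorname{sgn}(\dot R_k)\Delta_i(u/C_i)]=0$.
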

\begin{proof}
    Conditioning on \(\mathbf R=(R_1,\ldots,R_n)\),
    \[
    \E[C_i\dot R_i\operatorname{sgn}(\dot R_k)\given\mathbf R]
    =C_i\E[\dot R_i\given R_i]\E[\operatorname{sgn}(\dot R_k)\given R_k]
    \]
    by independence. Stationarity gives \(\E[\dot R_i\given R_i]=0\), hence \(d_k=0\). Applying Theorems~\ref{th:general_upper_bound} and~\ref{th:general_lower_bound} ends the proof.
\end{proof}

Thus, the general nonstationary bounds recover the general stationary product bounds in~\cite{dester2025level}. In fact, the upper bound is identical.

    
    
    
    
    
    
    
    

\section{Bounds and Exact Form for DGD Products}
\label{sec:DGD_Bound}

We now refine the previous product bounds under a conditional
Gaussian-derivative assumption. This assumption is satisfied under the
standard temporal constructions of several classical fading and shadowing
models. For example, the conventional isotropic \(\alpha\)-\(\mu\)
construction yields a conditionally Gaussian derivative and includes
Rayleigh, Nakagami-\(m\), Weibull, exponential, and related models as
special cases. Rice and \(\kappa\)-\(\mu\) fading also satisfy the
assumption under their standard isotropic Gaussian-component
constructions, while lognormal shadowing satisfies it when generated as
the exponential of a differentiable Gaussian process
\cite{yacoub2007alpha,yacoub2007kappa,patzold2010shadow}.
Next, we present the definition of a DGD process.

\subsection{DGD Process Definition}

\begin{definition} \label{def:DGD}
    A Rice-regular process \(\X\) belongs to the
    \textit{drifting Gaussian-derivative} class if, for every admissible
    \(t\) and \(x>0\),
    \[
        \dot X(t)\given (X(t)=x)
        \sim
        \mathcal N\!\left(\mu_\X(x;t),\sigma_\X^2(x;t)\right),
    \]
    where
    \begin{align*}
        \mu_\X(x;t)
        &\triangleq
        \E[\dot X(t)\given X(t)=x],\\
        \sigma_\X^2(x;t)
        &\triangleq
        \V(\dot X(t)\given X(t)=x) > 0.
    \end{align*}
\end{definition}

The following function is useful for LCR calculations of DGD processes.
For \(a\in\mathbb R\) and \(b>0\), define
\begin{align}
    \Psi(a,b)
    \triangleq
    \E[|a+bZ|],
    \qquad Z\sim\mathcal N(0,1).
\end{align}
Straightforward calculation yields
\begin{align}
    \Psi(a,b)
    =
    2b\phi\!\left(\frac{a}{b}\right)
    +
    a
    \left[
    2\Phi\!\left(\frac{a}{b}\right)-1
    \right],
\end{align}
where \(\phi\) and \(\Phi\) denote the standard Gaussian PDF and CDF, respectively.
The following elementary properties of \(\Psi\) will be used in the derivation of the DGD bounds.

\begin{lemma}\label{lem:Psi_properties}
    The function \(\Psi\) satisfies the following properties:
    \begin{enumerate}
        \item \(\Psi\) is even in its first argument, i.e.,
        \(\Psi(a,b)=\Psi(-a,b)\).
        \item \(\Psi\) is positively homogeneous, i.e.,
        \[
            \Psi(ca,cb)=c\Psi(a,b), \qquad c > 0.
        \]
        \item \(\Psi\) is convex on \(\R\times\R_{>0}\).
        \item For fixed \(a\), \(\Psi(a,b)\) is increasing in \(b>0\).
    \end{enumerate}
\end{lemma}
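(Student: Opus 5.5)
The approach is to work directly from the representation \(\Psi(a,b)=\E[|a+bZ|]\) with \(Z\sim\mathcal N(0,1)\). Each property then follows from an elementary property of the absolute value combined with linearity or monotonicity of expectation, so the closed form in terms of \(\phi\) and \(\Phi\) is needed only as a cross-check. Throughout, the random variable \(Z\) is fixed and \((a,b)\) varies, so every manipulation is pathwise in \(Z\) before taking expectations.

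For property 1, I will use that \(Z\) and \(-Z\) have the same law. This gives \(\Psi(-a,b)=\E[|-a+bZ|]=\E[|-a-bZ|]=\E[|a+bZ|]=\Psi(a,b)\). Alternatively, evenness can be read off the closed form, since \(\phi\) is even and \(a\,[2\Phi(a/b)-1]\) is a product of two odd functions of \(a\). For property 2, I will note that \(|ca+cbZ|=c\,|a+bZ|\) for \(c>0\) and take expectations.

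For property 3, fix \(\lambda\in[0,1]\) and points \((a_1,b_1)\) and \((a_2,b_2)\) with \(b_1,b_2>0\). The set \(\R\times\R_{>0}\) is convex, so the combination \((\lambda a_1+(1-\lambda)a_2,\ \lambda b_1+(1-\lambda)b_2)\) stays in the domain. Pathwise, the map \((a,b)\mapsto|a+bZ|\) is the composition of the linear map \((a,b)\mapsto a+bZ\) with the convex function \(|\cdot|\), so it is convex. Hence
\[
|\lambda a_1+(1-\lambda)a_2+(\lambda b_1+(1-\lambda)b_2)Z|\le \lambda|a_1+b_1Z|+(1-\lambda)|a_2+b_2Z|,
\]
and taking expectations yields convexity of \(\Psi\). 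Integrability is not an issue, because \(|a+bZ|\le|a|+b|Z|\) and \(\E|Z|<\infty\).

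Property 4 is the only step requiring an additional idea, since \(|a+bZ|\) is not pointwise monotone in \(b\). I will first combine properties 1 and 3. For fixed \(a\), the map \(g(b)=\Psi(a,b)\) extends to all \(b\in\R\) by the same formula. The extension is convex in \(b\), by the pathwise argument above applied on \(\R\), and it is even in \(b\), because \(Z\overset{d}{=}-Z\). A convex even function on \(\R\) has its minimum at \(0\) and is nondecreasing on \([0,\infty)\): for \(0\le b_1<b_2\), write \(b_1\) as a convex combination of \(-b_2\) and \(b_2\) and apply convexity, which gives \(g(b_1)\le \max\{g(-b_2),g(b_2)\}=g(b_2)\).

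For strict increase, I will differentiate the closed form instead. Using \(\phi'(x)=-x\phi(x)\), the derivative is \(\partial_b\Psi(a,b)=2\phi(a/b)\), because the two terms involving \(a^2/b^2\) cancel. This derivative is strictly positive, which gives strict monotonicity. I expect this cancellation to be the one computation worth checking carefully; if it fails, the convex-even argument still gives the weak monotonicity, which suffices for bounding purposes.
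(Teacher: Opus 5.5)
Your proposal is correct and follows the paper's proof essentially verbatim. Evenness comes from $Z\overset{d}{=}-Z$, homogeneity and convexity follow pathwise from $|a+bZ|$, and strict monotonicity follows from $\partial\Psi/\partial b=2\phi(a/b)>0$; the cancellation you flag does hold, since $\phi'(x)=-x\phi(x)$. Your extra convex--even argument is a valid, closed-form-free route to the weak monotonicity that Theorem~\ref{th:DGD_lower_bound} actually uses, but it is redundant given the derivative computation.
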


\begin{proof}
Evenness and homogeneity follow from symmetry of \(Z\); convexity follows
because \((a,b)\mapsto|a+bZ|\) is convex; and
\(\partial\Psi(a,b)/\partial b=2\phi(a/b)>0\).
\end{proof}

For a single DGD process, the total LCR and the signed imbalance are obtained directly from the first absolute moment and the mean of the conditional Gaussian derivative as follows.

\begin{lemma} \label{lem:DGD}
    If $\X$ is a DGD process, then, for $u>0$,
    \begin{align*}
        \Delta_\X(u)
        &=
        \mu_\X(u)f_X(u),\\
        \N_\X(u)
        &=
        \Psi(\mu_\X(u),\sigma_\X(u)) f_X(u),
    \end{align*}
\end{lemma}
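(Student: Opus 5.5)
The plan is to substitute the Gaussian conditional law from Definition~\ref{def:DGD} into the conditional-expectation forms of the crossing rates, \eqref{eq:delta_X} and \eqref{eq:N_def}. We first treat nontrivial levels with \(f_X(u)>0\), and then handle the degenerate levels separately.

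\emph{Step 1 (imbalance).} Fix an admissible \(t\) and a level \(u>0\) with \(f_X(u)>0\). By \eqref{eq:delta_X},
\[
\Delta_\X(u)=\E[\dot X\given X=u]\,f_X(u).
\]
By definition, \(\mu_\X(u)=\E[\dot X\given X=u]\), so \(\Delta_\X(u)=\mu_\X(u)f_X(u)\) immediately.

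\emph{Step 2 (total rate).} By \eqref{eq:N_def}, \(\N_\X(u)=\E[|\dot X|\given X=u]\,f_X(u)\). Under the DGD assumption, \(\dot X\given(X=u)\) has the same law as \(\mu_\X(u)+\sigma_\X(u)Z\) with \(Z\sim\mathcal N(0,1)\), where \(\sigma_\X(u)>0\). Hence
\[
\E[|\dot X|\given X=u]=\E\bigl[|\mu_\X(u)+\sigma_\X(u)Z|\bigr]=\Psi(\mu_\X(u),\sigma_\X(u)).
\]
This holds because the expectation of \(|\cdot|\) depends only on the law. The integrability condition in Rice-regularity, or simply the finiteness of a Gaussian first absolute moment, guarantees that this quantity is finite.

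\emph{Step 3 (degenerate levels).} The only subtle point is levels with \(f_X(u)=0\), where the conditional forms \eqref{eq:U_def}--\eqref{eq:N_def} are not used. There I would argue directly from the integral definitions \eqref{eq:N_def_int} and the integral form of \eqref{eq:delta_def}. Writing \(f_{X,\dot X}(u,\dot x)=f_{\dot X\given X}(\dot x\given u)\,f_X(u)\), both integrals vanish when \(f_X(u)=0\). The right-hand sides also vanish, since \(\Psi\) and \(\mu_\X\) are finite. So both identities hold for every \(u>0\).

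I expect no real obstacle. The only care needed is interpreting the conditional density consistently with the joint density in the integral definitions.
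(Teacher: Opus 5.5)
Your proposal is correct and takes the same approach as the paper, whose proof simply substitutes the Gaussian conditional law into \eqref{eq:N_def} and \eqref{eq:delta_X}. Your Step~3 on levels with \(f_X(u)=0\) goes beyond what the paper addresses; there it is cleaner to argue from \(f_X(u)=\int f_{X,\dot X}(u,\dot x)\,\d\dot x\) and nonnegativity of the joint density, rather than from the factorization, since the conditional density is only defined by that factorization when \(f_X(u)>0\).
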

\begin{proof}
    Follows directly from \eqref{eq:N_def} and \eqref{eq:delta_X}.
\end{proof}

\subsection{Product of DGD Processes}

Let $\mathcal R_1,\ldots,\mathcal R_n$ be DGD and $\mathcal R$ their product.
Define
\begin{align}
    \mu_k(r) &\triangleq \E[\dot R_k\given R_k=r],\\
    \sigma^2_k(r) &\triangleq \V(\dot R_k\given R_k=r),
\end{align}
for $k=1,\ldots,n$ and $r>0$, and
\begin{align}
    M_{\mathcal R} &\triangleq \sum_{k=1}^n C_k\mu_k(R_k),\\
    \mathbf S_{\mathcal R} &\triangleq \bigl(C_1\sigma_1(R_1),\ldots,C_n\sigma_n(R_n)\bigr).
\end{align}

Then, a formula for the crossing imbalance $\Delta_{\mathcal R}$ is readily derived from \eqref{eq:DeltaR} and Lemma~\ref{lem:DGD} as
\begin{align} \label{eq:DGD_DeltaR}
    \Delta_{\mathcal R}(u)
    &= \E[M_{\mathcal R}\given R=u] f_R(u) \\
    &=
    \sum_{k=1}^n
    \E\!\left[
    \mu_k\!\left(\frac{u}{C_k}\right)f_{k}\!\left(\frac{u}{C_k}\right)
    \right].
\end{align}

We next characterize the product process. Although the product of DGD factors is not necessarily DGD, its conditional derivative admits an exact Gaussian mixture representation.

\begin{lemma}\label{lem:DGD_exact}
    Let $\mathcal R_1, \dots, \mathcal R_n$ be independent DGD processes.
    If their product $\mathcal R$ is Rice-regular, then
    \begin{align*}
        \N_{\mathcal R}(u)
        &= \E[\Psi(M_{\mathcal R},\|\mathbf S_{\mathcal R}\|_2)\given R=u]f_R(u), \quad u>0.
    \end{align*}
\end{lemma}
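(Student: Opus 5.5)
The plan is to condition on the full vector of factor values \(\mathbf R=(R_1,\ldots,R_n)\) and use the tower property. By \eqref{eq:N_def} applied to \(\mathcal R\), we have \(\N_{\mathcal R}(u)=\E[|\dot R|\given R=u]f_R(u)\). Since \(R=\prod_k R_k\) is a function of \(\mathbf R\), the \(\sigma\)-field generated by \(R\) is contained in that of \(\mathbf R\). Hence
\[
\E[|\dot R|\given R=u]=\E\bigl[\E[|\dot R|\given \mathbf R]\given R=u\bigr].
\]
It therefore suffices to show that
\[
\E[|\dot R|\given \mathbf R]=\Psi\bigl(M_{\mathcal R},\|\mathbf S_{\mathcal R}\|_2\bigr)
\]
almost surely.

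The key step is to identify the conditional law of \(\dot R\) given \(\mathbf R=\mathbf r\). By \eqref{eq:Rdot}, \(\dot R=\sum_k C_k\dot R_k\), and given \(\mathbf R=\mathbf r\) each \(C_k=\prod_{i\ne k}r_i\) is a deterministic positive constant. Mutual independence of the processes implies that the pairs \((R_k(t),\dot R_k(t))\), \(k=1,\dots,n\), are independent. Consequently, conditionally on \(\mathbf R=\mathbf r\), the derivatives \(\dot R_1,\ldots,\dot R_n\) are independent, and each \(\dot R_k\) has the law of \(\dot R_k\given R_k=r_k\). By Definition~\ref{def:DGD} this law is \(\mathcal N(\mu_k(r_k),\sigma_k^2(r_k))\). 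I would state this factorization explicitly in density form:
\[
f_{\dot{\mathbf R}\given\mathbf R}(\dot{\mathbf x}\given\mathbf r)=\prod_k f_{\dot R_k\given R_k}(\dot x_k\given r_k).
\]
A linear combination of independent Gaussians with deterministic coefficients is Gaussian. Its mean is \(\sum_k C_k\mu_k(r_k)=M_{\mathcal R}\) and its variance is \(\sum_k C_k^2\sigma_k^2(r_k)=\|\mathbf S_{\mathcal R}\|_2^2\). Thus, conditionally on \(\mathbf R\), \(\dot R\) has the same law as \(M_{\mathcal R}+\|\mathbf S_{\mathcal R}\|_2 Z\) with \(Z\sim\mathcal N(0,1)\). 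By the definition of \(\Psi\) (note \(\|\mathbf S_{\mathcal R}\|_2>0\) since each \(\sigma_k>0\) and \(C_k>0\)), this gives \(\E[|\dot R|\given\mathbf R]=\Psi(M_{\mathcal R},\|\mathbf S_{\mathcal R}\|_2)\). Substituting into the tower identity and multiplying by \(f_R(u)\) yields the claim. This is the Gaussian-mixture representation: \(\dot R\given R=u\) is a mixture over \(\mathbf R\given R=u\) of these Gaussians.

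The main obstacle is rigor in conditioning on the measure-zero event \(\{R=u\}\) rather than the probabilistic content. I would handle it at the density level. Write
\[
\N_{\mathcal R}(u)=\int|\dot r|\,f_{R,\dot R}(u,\dot r)\,\d\dot r
\]
and express \(f_{R,\dot R}\) via the joint density of \((\mathbf R,\dot{\mathbf R})\), which is the product of the factor joint densities, using the change of variables \(r_n=u/\prod_{i<n}r_i\) with Jacobian \(1/C_n\). Then apply Fubini/Tonelli, justified by nonnegativity and the Rice-regularity integrability assumption for \(\mathcal R\). This reduces the statement to the pointwise Gaussian computation above. Both sides are then equal to
\[
\E\Bigl[\Psi\bigl(M_{\mathcal R},\|\mathbf S_{\mathcal R}\|_2\bigr)\big|_{R_n=u/C_n}\,f_n(u/C_n)/C_n\Bigr],
\]
where the expectation is over \(R_1,\ldots,R_{n-1}\). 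This is exactly \(\E[\Psi(\cdot)\given R=u]f_R(u)\) under the standard regular-conditional-density interpretation.
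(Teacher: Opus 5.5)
Your proposal is correct and matches the paper's proof essentially step for step. The shared steps are: the $\dot R_k$ are conditionally independent given $\mathbf R$; hence $\dot R$ given $\mathbf R$ is $\mathcal N(M_{\mathcal R},\|\mathbf S_{\mathcal R}\|_2^2)$, so $\E[|\dot R| \mid \mathbf R]=\Psi(M_{\mathcal R},\|\mathbf S_{\mathcal R}\|_2)$; and the tower property over $\{R=u\}$ finishes the argument. Your density-level justification, via $r_n=u/C_n$ with Jacobian $1/C_n$ and Tonelli, adds rigor the paper omits and agrees with the regular-conditional-density interpretation the paper uses implicitly.
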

\begin{proof}
    Since the processes \(\mathcal R_1,\ldots,\mathcal R_n\) are mutually
    independent, the random vectors \((R_k,\dot R_k)\), \(k=1,\ldots,n\), are
    mutually independent. Hence, conditionally on
    \(\mathbf R=(R_1,\ldots,R_n)\), the random variables
    \(\dot R_1,\ldots,\dot R_n\) are independent.
    Moreover, by the DGD assumption,
    \[
        \dot R_k \given \mathbf R
        \sim
        \mathcal N\!\left(\mu_k(R_k),\sigma_k^2(R_k)\right),
        \qquad k=1,\ldots,n .
    \]
    Using the product rule,
    $
        \dot R=\sum_{k=1}^n C_k\dot R_k .
    $
    Hence, 
    \[
        \dot R \given \mathbf R \sim \mathcal N(M_{\mathcal R},\|\mathbf S_{\mathcal R}\|_2^2).
    \]
    Therefore,
    $
        \E\!\left[|\dot R|\given \mathbf R\right]
        =
        \Psi(M_{\mathcal R},\|\mathbf S_{\mathcal R}\|_2).
    $
    Since $R$ is a function of $\mathbf R$, the total expectation law gives
    \begin{align*}
        \E\!\left[|\dot R|\given R=u\right]
        &=
        \E\!\left[
            \E\!\left[|\dot R|\given \mathbf R\right]
            \given R=u
        \right] \\
        &=
        \E\!\left[\Psi(M_{\mathcal R},\|\mathbf S_{\mathcal R}\|_2)\given R=u\right].
    \end{align*}
    Multiplying by $f_R(u)$ and applying \eqref{eq:N_def} proves the claim.
\end{proof}


\begin{theorem} \label{th:DGD_lower_bound}
    Let $\mathcal R_1, \dots, \mathcal R_n$ be independent DGD processes.
    If their product $\mathcal R$ is Rice-regular, then
    \begin{align*}
        \N_{\mathcal R}(u) 
        &\ge
        \Psi\!\left(
        \Delta_{\mathcal R}(u),
        \left\lVert \overline{\boldsymbol\Sigma}_{\mathcal R}(u)\right\rVert_2
        \right), \quad u>0,
    \end{align*}
    where
    $
        \overline{\boldsymbol\Sigma}_{\mathcal R}(u)
        \triangleq
        \E[\mathbf S_\mathcal R\given R=u] f_R(u).
    $
\end{theorem}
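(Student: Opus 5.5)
The plan is to start from the exact representation in Lemma~\ref{lem:DGD_exact} and push the conditional expectation inside \(\Psi\) using convexity and positive homogeneity (Lemma~\ref{lem:Psi_properties}).

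\textbf{Step 1: pull the expectation inside \(\Psi\).} Conditionally on \(R=u\), the pair \((M_{\mathcal R},\|\mathbf S_{\mathcal R}\|_2)\) is a random point of \(\R\times\R_{>0}\), and \(\Psi\) is convex there. Jensen's inequality therefore gives
\[
\E\!\left[\Psi(M_{\mathcal R},\|\mathbf S_{\mathcal R}\|_2)\given R=u\right]
\ge
\Psi\!\left(\E[M_{\mathcal R}\given R=u],\,\E[\|\mathbf S_{\mathcal R}\|_2\given R=u]\right).
\]
For this I need the conditional expectations to be finite. That should follow from Rice-regularity of the product together with \(\Psi(a,b)\ge \max\{|a|,\sqrt{2/\pi}\,b\}\); the second inequality holds because \(\Psi(a,b)\ge\Psi(0,b)\), since \(\Psi\) is even and convex in \(a\).

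\textbf{Step 2: move to the second argument's norm.} The map \(\mathbf s\mapsto\|\mathbf s\|_2\) is convex, so conditional Jensen gives
\[
\E[\|\mathbf S_{\mathcal R}\|_2\given R=u]\ge \left\|\E[\mathbf S_{\mathcal R}\given R=u]\right\|_2 .
\]
Since \(\Psi\) is increasing in its second argument (Lemma~\ref{lem:Psi_properties}, property 4), I can replace \(\E[\|\mathbf S_{\mathcal R}\|_2\given R=u]\) by this smaller quantity. The bound remains a lower bound.

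\textbf{Step 3: rescale by \(f_R(u)>0\).} Multiplying by \(f_R(u)\) and using positive homogeneity, the first argument becomes \(\E[M_{\mathcal R}\given R=u]f_R(u)=\Delta_{\mathcal R}(u)\) by \eqref{eq:DGD_DeltaR}. The second argument becomes \(\|\E[\mathbf S_{\mathcal R}\given R=u]f_R(u)\|_2=\|\overline{\boldsymbol\Sigma}_{\mathcal R}(u)\|_2\). Combined with Lemma~\ref{lem:DGD_exact}, this yields the claim. If \(f_R(u)=0\), the claim is trivial, reading \(\Psi(0,0)=0\) by continuity.

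\textbf{Main obstacle.} The only delicate point is justifying Jensen for the conditional distribution given the null event \(\{R=u\}\). I would handle it by writing the conditional expectations as integrals against the regular conditional law of \(\mathbf R\) given \(R=u\), which exists under the density assumptions of Rice-regularity. I would also note that \(\Psi\) extends continuously and convexly to \(b\ge0\) as \(\Psi(a,0)=|a|\), so boundary cases cause no trouble.
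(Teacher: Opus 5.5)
Your proposal is correct and follows the paper's proof essentially step for step: Lemma~\ref{lem:DGD_exact}, conditional Jensen via convexity of \(\Psi\), Jensen for the \(\ell_2\) norm combined with monotonicity of \(\Psi\) in its second argument, and positive homogeneity to absorb \(f_R(u)\). The only difference is that the paper rescales by \(f_R(u)\) before applying Jensen rather than after, which is immaterial, and your integrability justification via \(\Psi(a,b)\ge\max\{|a|,\sqrt{2/\pi}\,b\}\) and your remark on the \(f_R(u)=0\) case are sensible additions that the paper leaves implicit.
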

\begin{proof}
    By Lemma~\ref{lem:DGD_exact},
    \[
        \N_{\mathcal R}(u)
        =
        f_R(u)\E[\Psi(M_{\mathcal R},\|\mathbf S_{\mathcal R}\|_2)\given R=u].
    \]
    Using the absolute homogeneity and convexity of $\Psi$, Jensen's inequality gives
    \begin{align*}
        \N_{\mathcal R}(u)
        &=
        \E\!\left[
            \Psi\!\left(f_R(u)M_{\mathcal R},f_R(u)\|\mathbf S_{\mathcal R}\|_2\right)
            \given R=u
        \right] \\
        &\ge
        \Psi\!\left(
            f_R(u)\E[M_{\mathcal R}\given R=u],
            f_R(u)\E[\|\mathbf S_{\mathcal R}\|_2\given R=u]
        \right).
    \end{align*}
    The first argument is $\Delta_{\mathcal R}(u)$
    by \eqref{eq:DeltaR}. 
    
    By Jensen's inequality for the $\ell_2$-norm,
    \begin{align*}
        f_R(u)\E[\|\mathbf S_{\mathcal R}\|_2\given R=u]
        &\ge
        \left\|
            f_R(u)\E[\mathbf S_{\mathcal R}\given R=u]
        \right\|_2 \\
        &=\|\overline{\boldsymbol\Sigma}_{\mathcal R}(u)\|_2.
    \end{align*}
    Since $\Psi(a,b)$ is nondecreasing in $b\ge0$, it follows that
    \[
        \N_{\mathcal R}(u)
        \ge
        \Psi\!\left(
            \Delta_{\mathcal R}(u),
            \|\overline{\boldsymbol\Sigma}_{\mathcal R}(u)\|_2
        \right),
    \]
    which proves the result.
\end{proof}

To obtain a tractable upper bound, we first derive a sharp quadratic upper bound for $\Psi$ using Sampford's inequality \cite{sampford1953mills}.

\begin{lemma} \label{lem:Psi_upper_bound}
    For every \(a\in\R\) and \(b > 0\),
    \[
        \Psi(a,b)
        \le
        \sqrt{a^2+\frac{2}{\pi}b^2}.
    \]
\end{lemma}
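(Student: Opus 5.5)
By evenness and positive homogeneity of $\Psi$ (Lemma~\ref{lem:Psi_properties}), it suffices to take $b=1$ and $x\triangleq a\ge 0$. Writing $q(x)\triangleq 1-\Phi(x)$ and introducing the stop-loss function
\[
w(x)\triangleq \phi(x)-x\,q(x)=\E[(Z-x)^+]>0,
\]
we get $\Psi(x,1)=x+2w(x)$. Since both sides of the claimed inequality are positive, the claim is equivalent to $h(x)\ge 0$ for $x\ge0$, where
\[
h(x)\triangleq x^2+\frac{2}{\pi}-\bigl(x+2w(x)\bigr)^2=\frac{2}{\pi}-4xw(x)-4w(x)^2 .
\]

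First I check the endpoints. Since $w(0)=\phi(0)=1/\sqrt{2\pi}$, we have $h(0)=2/\pi-4/(2\pi)=0$, so the bound is attained at $a=0$. As $x\to\infty$, $w(x)=\mathcal O(\phi(x)/x^2)$, hence $h(x)\to 2/\pi>0$. Next, using $w'=-q$ and $q'=-\phi$, I compute
\[
h'(x)=-4w+4xq+8wq=4\bigl[q(x+2w)-w\bigr].
\]
At $x=0$ this gives $h'(0)=4\bigl[\tfrac12\cdot 2\phi(0)-\phi(0)\bigr]=0$. So I also need the second-order behaviour at the origin, $h''(0)=4\bigl[-\phi(0)\cdot 2\phi(0)+q(0)(1-2q(0))+q(0)\bigr]=4\bigl[\tfrac12-\tfrac1\pi\bigr]>0$. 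Thus $h$ starts at zero with a strict local minimum and ends positive.

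The plan is then to show $h'(x)\ge0$ on $(0,\infty)$, which is equivalent to
\[
w\,(1-2q)\le x\,q .
\]
In terms of the inverse Mills ratio $\lambda\triangleq\phi/q$, this reads $(\lambda-x)(1-2q)\le x$. This is where Sampford's inequality \cite{sampford1953mills} enters. It states $0<\lambda'(x)=\lambda(\lambda-x)<1$, equivalently $\lambda-x<1/\lambda$. It therefore suffices to prove
\[
(1-2q)\le x\lambda,\qquad\text{i.e.}\qquad 2\Phi(x)-1\le \frac{x\,\phi(x)}{1-\Phi(x)} .
\]

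This last elementary inequality is the main obstacle, since it is tight at the origin: both sides vanish there, and both have derivative $2\phi(0)$. My first attempt is to compare derivatives. The left-hand side has derivative $2\phi$, while the right-hand side has derivative $\lambda+x\lambda'=\lambda+x\lambda(\lambda-x)$. The inequality $2\phi\le\lambda$ holds because $\lambda=\phi/q$ and $q\le 1/2$ for $x\ge0$, and the term $x\lambda(\lambda-x)$ is nonnegative because $\lambda>x$. Hence the derivative of the right-hand side dominates, and integrating from $0$ gives the inequality. If this chain turns out to be too loose somewhere, the fallback is to analyse critical points of $h$ directly. At any interior critical point one has $w(1-2q)=xq$, and substituting this into $h$ should yield a sign that Sampford's bound controls. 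Either route, combined with $h(0)=0$, gives $h\ge0$ and proves Lemma~\ref{lem:Psi_upper_bound}, with equality when $a=0$.
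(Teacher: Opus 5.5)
Your proof is correct. Its skeleton matches the paper's: reduce to $b=1$ and $x=a\ge0$, then show that the squared gap vanishes at $0$ and is nondecreasing. After substituting $w=\phi-xq$, your condition $w(1-2q)\le xq$ is exactly the paper's derivative condition $\phi(x)\,(2\Phi(x)-1)\le 2x\,\Phi(x)\,(1-\Phi(x))$.

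The difference is in how Sampford's inequality closes this condition.

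\textbf{The paper's route.} It integrates $v'<1$ over the symmetric interval $[-x,x]$ and uses the reflection $v(-x)=\phi(x)/\Phi(x)$. Then $v(x)-v(-x)=\phi(x)(2\Phi(x)-1)/\bigl(\Phi(x)(1-\Phi(x))\bigr)\le 2x$ is literally the target, obtained in one line.

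\textbf{Your route.} You use Sampford only pointwise at $x$, in the form $\lambda-x<1/\lambda$. This forces the auxiliary bound $2\Phi(x)-1\le x\lambda(x)$. Your derivative comparison for it is valid: both sides vanish at $0$, $\lambda\ge 2\phi$ because $q\le\tfrac12$ on $[0,\infty)$, and $x\lambda'\ge0$ because $\lambda>x$.

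\textbf{Trade-off.} The paper's argument is shorter and needs no second inequality. Yours needs the nontrivial half of Sampford's bound only at the point $x$ itself, plus the elementary Mills bound $\lambda>x$, at the cost of an extra integration step.

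\textbf{Presentation.} State explicitly that $1-2q\ge0$ for $x\ge0$ when you multiply $\lambda-x<1/\lambda$ by $1-2q$. Once $h(0)=0$ and $h'\ge0$ on $[0,\infty)$ are established, the limit at infinity, the computation of $h''(0)$, and the critical-point fallback are superfluous. The hedging (``if this chain turns out to be too loose'') should also be dropped, since the chain is airtight.
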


\begin{proof}
    See Appendix~\ref{app:proof_upper_bound_Psi}.
\end{proof}

The bound is exact at \(a=0\) and asymptotically tight in relative error
as \(|a|/b\to\infty\), since
\[
\sqrt{a^2+\tfrac{2}{\pi}b^2}-\Psi(a,b)
=\frac{b^2}{\pi|a|}+\mathcal O\!\left(\frac{b^4}{|a|^3}\right).
\]

The inequality yields the following upper bound.

\begin{theorem} \label{th:DGD_upper_bound}
    Let $\mathcal R_1, \dots, \mathcal R_n$ be independent DGD processes.
    If their product $\mathcal R$ is Rice-regular, then, for $u > 0$,
    \[
        \N_{\mathcal R}(u)
        \le
        \sqrt{
        \E\!\left[
        M_{\mathcal R}^2+\frac{2}{\pi}\|\mathbf S_{\mathcal R}\|_2^2
        \given R=u
        \right]
        } f_R(u).
    \]
    
\end{theorem}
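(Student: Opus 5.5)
The plan is to start from the exact Gaussian-mixture representation of Lemma~\ref{lem:DGD_exact}. That lemma gives
\[
    \N_{\mathcal R}(u)=\E\!\left[\Psi\bigl(M_{\mathcal R},\|\mathbf S_{\mathcal R}\|_2\bigr)\given R=u\right]f_R(u),
\]
and we bound the integrand pointwise before averaging. For every realization of \(\mathbf R=(R_1,\ldots,R_n)\), the second argument satisfies \(\|\mathbf S_{\mathcal R}\|_2>0\). This holds because each \(C_k>0\) and, by Definition~\ref{def:DGD}, each \(\sigma_k(R_k)>0\). Lemma~\ref{lem:Psi_upper_bound} therefore applies, with \(a=M_{\mathcal R}\) and \(b=\|\mathbf S_{\mathcal R}\|_2\), and yields
\[
    \Psi\bigl(M_{\mathcal R},\|\mathbf S_{\mathcal R}\|_2\bigr)\le\sqrt{M_{\mathcal R}^2+\tfrac{2}{\pi}\|\mathbf S_{\mathcal R}\|_2^2}
\]
almost surely.

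Next we take the conditional expectation given \(R=u\). Monotonicity of conditional expectation preserves the pointwise inequality. We then apply Jensen's inequality for the concave function \(x\mapsto\sqrt{x}\) on \([0,\infty)\), in the form \(\E[\sqrt{Y}\given R=u]\le\sqrt{\E[Y\given R=u]}\), where \(Y=M_{\mathcal R}^2+\tfrac{2}{\pi}\|\mathbf S_{\mathcal R}\|_2^2\ge0\). Multiplying both sides by \(f_R(u)\ge0\) gives exactly the claimed bound.

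The only delicate point is integrability. If \(\E[Y\given R=u]=\infty\), the right-hand side is \(+\infty\) and the inequality is trivial. If \(\E[Y\given R=u]\) is finite, Jensen applies in the usual way, and the left-hand side is finite by Rice-regularity of \(\mathcal R\). I expect no real obstacle beyond these conditional-expectation technicalities, since the substantive analytic work lies in Lemma~\ref{lem:Psi_upper_bound}. One can optionally note that
\[
    \E[\|\mathbf S_{\mathcal R}\|_2^2\given R=u]=\sum_k\E[C_k^2\sigma_k^2(R_k)\given R=u],
\]
which makes the bound computable factor-wise.
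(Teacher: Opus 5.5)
Your proposal is correct and matches the paper's proof: apply Lemma~\ref{lem:Psi_upper_bound} pointwise inside the exact mixture representation of Lemma~\ref{lem:DGD_exact}, then use conditional Jensen for the concave square root, and multiply by \(f_R(u)\). Your extra checks, that \(\|\mathbf S_{\mathcal R}\|_2>0\) and the integrability remark, are harmless additions the paper leaves implicit.
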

\begin{proof}
    Lemma~\ref{lem:Psi_upper_bound} and Jensen's inequality for the concave
    square root give
    \begin{align*}
    \E\!\left[
        \Psi(M_{\mathcal R},\|\mathbf S_{\mathcal R}\|_2)
        \given R=u
    \right]
    &\le
    \E\!\left[
        \sqrt{
            M_{\mathcal R}^2+
            \tfrac{2}{\pi}\|\mathbf S_{\mathcal R}\|_2^2
        }
        \given R=u
    \right]\\
    &\le
    \sqrt{
        \E\!\left[
            M_{\mathcal R}^2+
            \tfrac{2}{\pi}\|\mathbf S_{\mathcal R}\|_2^2
            \given R=u
        \right]
    }.
    \end{align*}
    Multiplying by \(f_R(u)\) ends the proof.
\end{proof}


\begin{remark} \label{rem:DGD_bounds}
The DGD bounds become asymptotically equivalent in two important
regimes. 

First, in a drift-dominated regime, if
\begin{equation}
    \frac{
        \V(M_{\mathcal R}\given R=u)
        +
        \E[\|\mathbf S_{\mathcal R}\|_2^2\given R=u]
    }{
        \E[M_{\mathcal R}\given R=u]^2
    }
    \to 0,
\end{equation}
then both bounds are asymptotic to
$
    |\Delta_{\mathcal R}(u)|.
$

Second, in a spread-dominated regime, if
\[
    \frac{
        \E[M_{\mathcal R}^2\given R=u]
    }{
        \|\E[\mathbf S_{\mathcal R}\given R=u]\|_2^2
    }
    \to 0
\]
and
\[
    \frac{
        \E\!\left[
            \|\mathbf S_{\mathcal R}
            -\E[\mathbf S_{\mathcal R}\given R=u]\|_2^2
            \given R=u
        \right]
    }{
        \|\E[\mathbf S_{\mathcal R}\given R=u]\|_2^2
    }
    \to 0,
\]
then both DGD bounds are asymptotic to
$
    \sqrt{\frac{2}{\pi}}\,
    \|\overline{\boldsymbol\Sigma}_{\mathcal R}(u)\|_2.
$
\end{remark}

In the stationary case, the drift terms vanish and the preceding bounds simplify considerably.

\begin{corollary}\label{cor:DGD_stat_bounds}
    Let $\mathcal R_1, \dots, \mathcal R_n$ be independent stationary DGD processes.
    If their product $\mathcal R$ is Rice-regular, then
    \[
        \left\lVert \overline{\mathbf N}(u) \right\rVert_2
        \le
        \N_{\mathcal R}(u)
        \le
         \sqrt{\frac{2}{\pi} \E[\|\mathbf S_{\mathcal R}\|_2^2 \given R=u] } \, f_R(u),
    \]
    for every $u>0$, where
    $
        \overline{\mathbf N}(u)
        =
        \bigl(\overline{\N}_1(u),\ldots,\overline{\N}_n(u)\bigr).
    $
\end{corollary}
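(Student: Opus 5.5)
The plan is to specialize Theorems~\ref{th:DGD_lower_bound} and~\ref{th:DGD_upper_bound} to the case of zero drift.

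\textbf{Drift terms.} Stationarity gives \(\E[\dot R_k\given R_k=r]=0\) for every \(r>0\) and every \(k\), as already used in the proof of Corollary~\ref{cor:general_stat_bounds}. Hence \(\mu_k\equiv 0\), so \(M_{\mathcal R}=0\) almost surely and, by \eqref{eq:DGD_DeltaR}, \(\Delta_{\mathcal R}(u)=0\).

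\textbf{Upper bound.} Substituting \(M_{\mathcal R}=0\) into Theorem~\ref{th:DGD_upper_bound} immediately gives
\(\N_{\mathcal R}(u)\le\sqrt{\tfrac{2}{\pi}\E[\|\mathbf S_{\mathcal R}\|_2^2\given R=u]}\,f_R(u)\).

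\textbf{Lower bound.} Theorem~\ref{th:DGD_lower_bound} with \(\Delta_{\mathcal R}(u)=0\) gives
\[
\N_{\mathcal R}(u)\ge\Psi\bigl(0,\|\overline{\boldsymbol\Sigma}_{\mathcal R}(u)\|_2\bigr)=\sqrt{\tfrac{2}{\pi}}\,\|\overline{\boldsymbol\Sigma}_{\mathcal R}(u)\|_2 .
\]
The identity uses \(\Psi(0,b)=b\,\E|Z|=b\sqrt{2/\pi}\). Here \(\overline{\boldsymbol\Sigma}_{\mathcal R}(u)\) has \(k\)th component \(\E[C_k\sigma_k(R_k)\given R=u]f_R(u)\).

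It remains to identify \(\sqrt{2/\pi}\) times this component with \(\overline\N_k(u)\). Condition on \(\mathbf R\). By independence and the DGD assumption, \(\dot R_k\given\mathbf R\sim\mathcal N(0,\sigma_k^2(R_k))\). Therefore
\[
\E[C_k|\dot R_k|\given\mathbf R]=C_k\sigma_k(R_k)\sqrt{2/\pi}.
\]
The tower property, conditioning on \(R=u\), then gives \(\overline\N_k(u)=\sqrt{2/\pi}\,\overline{\Sigma}_k(u)\). Consequently \(\sqrt{2/\pi}\,\|\overline{\boldsymbol\Sigma}_{\mathcal R}(u)\|_2=\|\overline{\mathbf N}(u)\|_2\), which is the claimed lower bound.

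\textbf{Main obstacle.} The only step needing care is this last identification. It requires conditioning on the full vector \(\mathbf R\) rather than on \(R_k\) alone, using that \(\dot R_k\) is independent of \(\{R_i\}_{i\neq k}\) given \(R_k\). This is the same argument as in the proof of Lemma~\ref{lem:DGD_exact}, so I expect no real difficulty.
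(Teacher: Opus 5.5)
Your proposal is correct and follows the paper's proof. The paper sets $\mu_k\equiv0$ so that $M_{\mathcal R}=0$ and $\Delta_{\mathcal R}(u)=0$, then substitutes into Theorems~\ref{th:DGD_lower_bound} and~\ref{th:DGD_upper_bound} using $\overline{\mathbf N}(u)=\sqrt{2/\pi}\,\overline{\boldsymbol\Sigma}_{\mathcal R}(u)$. The only difference is cosmetic: you obtain that last identity by conditioning on $\mathbf R$ and using the tower property, whereas the paper gets it from Lemma~\ref{lem:DGD} applied to each factor through $\overline{\N}_k(u)=\E[\N_k(u/C_k)]$, and the two arguments are equivalent.
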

\begin{proof}
Stationarity gives \(\mu_k(r)=0\) at every nontrivial level in the support of \(R_k\).
Consequently, \(M_{\mathcal R}=0\) and \(\Delta_{\mathcal R}(u)=0\).
Moreover, by Lemma~\ref{lem:DGD},
$
    \overline{\mathbf N}(u)
    =
    \sqrt{\frac{2}{\pi}}\,
    \overline{\boldsymbol\Sigma}_{\mathcal R}(u).
$
Substitution in Theorems~\ref{th:DGD_lower_bound}
and~\ref{th:DGD_upper_bound} yields the result.
\end{proof}

Thus, the DGD nonstationary lower bound recovers the Gaussian-derivative stationary bound in~\cite{dester2025level}.
On the other hand, to the best of the author's knowledge, the upper bound is novel, even for the stationary case.

\section{Applications}
\label{sec:application}

\subsection{Nonstationary Double-Rayleigh}
\label{subsec:nonstationary_double_rayleigh}

Consider the nonstationary Double-Rayleigh product
\[
    R(t)=R_1(t)R_2(t),
\]
where
\[
    R_k(t)=a_k(t)Z_k(t),\qquad k=1,2.
\]
Here, \(a_k(t)>0\) is a deterministic continuously differentiable
large-scale gain, and \(Z_k(t)\) is a normalized stationary Rayleigh envelope
with density
\[
    f_{Z_k}(z)=2z\euler^{-z^2},\qquad z>0.
\]
Let \(\omega_k\) denote the angular maximum Doppler shift of \(Z_k\).
Then
\[
    \dot Z_k\given (Z_k=z)
    \sim
    \mathcal N\!\left(0,\frac{\omega_k^2}{4}\right).
\]
The stationary Rayleigh processes \(Z_1\) and \(Z_2\) are independent.

Since
\[
    \dot R_k(t)
    =
    \dot a_k(t)Z_k(t)+a_k(t)\dot Z_k(t),
\]
we have, conditionally on \(R_k(t)=r\),
\[
    \dot R_k(t)\given (R_k(t)=r)
    \sim
    \mathcal N\!\left(
    \lambda_k(t)r,\,
    a_k^2(t)\frac{\omega_k^2}{4}
    \right),
\]
where
$
    \lambda_k(t)\triangleq \frac{\dot a_k(t)}{a_k(t)} .
$
Hence each \(\mathcal R_k\) is DGD with
\[
    \mu_k(r;t)=\lambda_k(t)r,
    \qquad
    \sigma_k(r;t)=a_k(t)\frac{\omega_k}{2}.
\]

Let the normalized level and the aggregate drift rate be
\[
    \rho(t)\triangleq \frac{u}{a_1(t)a_2(t)},
    \qquad
    \lambda_\Sigma(t)\triangleq \lambda_1(t)+\lambda_2(t).
\]
In what follows, we omit the explicit dependence on \(t\). By
Lemma~\ref{lem:DGD_exact}, the exact total LCR of \(\mathcal R\) at level
\(u>0\) is
\begin{align}
    \N_{\mathcal R}(u)
    &=
    4\rho
    \int_0^\infty
    \frac{1}{x}
    \Psi\!\left(
    \rho\lambda_\Sigma,
    \frac{1}{2}
    \sqrt{
    \omega_1^2x^2+\omega_2^2\frac{\rho^2}{x^2}
    }
    \right) \nonumber\\
    &\qquad\qquad\cdot
    \exp\!\left(
    -x^2-\frac{\rho^2}{x^2}
    \right)\d x .
    \label{eq:ns_double_rayleigh_exact}
\end{align}
A closed form is recovered only when \(\lambda_\Sigma=0\) and \(\omega_1=\omega_2\) \cite[Appendix~A]{dester2025level}, which includes the stationary i.i.d. case.

The averaged individual contributions are
\begin{align}
    \overline{\N}_k(u)
    &=
    4\rho
    \int_0^\infty
    \Psi\!\left(
    \lambda_k\frac{\rho}{x},
    \frac{\omega_k}{2}
    \right)
    \exp\!\left(
    -x^2-\frac{\rho^2}{x^2}
    \right)\d x .
\end{align}

We define the lower incomplete modified Bessel function of the second kind and
order zero by \cite{jones2007incomplete}
\[
    \operatorname{K}_0^{\rm L}(z,a)
    \triangleq
    \int_0^a e^{-z\cosh y}\,\d y ,
    \qquad z>0,\ a\ge0.
\]
Then, for \(k=1,2\), the averaged individual contributions admit the closed
forms derived in Appendix~\ref{app:Nk}:
\begin{align}
    \overline{\N}_k(u)
    &=
    \sqrt{2}\,\omega_k\rho
    \exp\!\left[
    -2\rho
    \sqrt{
    1+\frac{2\lambda_k^2}{\omega_k^2}
    }
    \right] \nonumber\\
    &\quad+
    4|\lambda_k|\rho^2
    \operatorname{K}_0^{\rm L}\!\left(
    2\rho,
    \operatorname{arsinh}
    \frac{\sqrt{2}|\lambda_k|}{\omega_k}
    \right).
    \label{eq:Nbar_closed}
\end{align}
Similarly, for \((k,\overline k)=(1,2),(2,1)\), we obtain
\begin{align}
    d_k(u)
    &=
    4\lambda_{\overline k}\rho^2
    \operatorname{sgn}(\lambda_k)
    \operatorname{K}_0^{\rm L}\!\left(
    2\rho,
    \operatorname{arsinh}
    \frac{\sqrt{2}|\lambda_k|}{\omega_k}
    \right).
\end{align}
Then the universal bounds of Theorems~\ref{th:general_upper_bound}
and~\ref{th:general_lower_bound} are
\begin{equation}
    \max_{k=1,2}\{\overline{\N}_k(u)+d_k(u)\}
    \le
    \N_{\mathcal R}(u)
    \le
    \overline{\N}_1(u)+\overline{\N}_2(u).
    \label{eq:ns_double_rayleigh_universal_bounds}
\end{equation}

Let
\[
    \boldsymbol\omega\triangleq(\omega_1,\omega_2),
    \qquad
    \|\boldsymbol\omega\|_2=\sqrt{\omega_1^2+\omega_2^2}.
\]
For the DGD bounds of Theorems~\ref{th:DGD_lower_bound}
and~\ref{th:DGD_upper_bound}, first note that
\[
    \left\lVert
    \overline{\boldsymbol\Sigma}_{\mathcal R}(u)
    \right\rVert_2
    =
    \sqrt{\pi}\,
    \|\boldsymbol\omega\|_2
    \rho\euler^{-2\rho}.
\]
Moreover, \(M_{\mathcal R}=\rho\lambda_\Sigma\) on the product level set \(Z_1Z_2=\rho\), and
hence
\[
    \Delta_{\mathcal R}(u)
    =
    4\rho^2\lambda_\Sigma \operatorname{K}_0(2\rho),
\]
where \(\operatorname{K}_\nu\) is the modified Bessel function of the second
kind and order \(\nu\) \cite[Eq.~(7.2.13)]{bateman1953higher}. Therefore, using the homogeneity of \(\Psi\), the
DGD lower bound becomes
\begin{align}
    \frac{\N_{\mathcal R}(u)}{\|\boldsymbol\omega\|_2}
    &\ge
    4\rho^2 \operatorname{K}_0(2\rho) \, \Psi\!\left(
    \frac{\lambda_\Sigma}{\|\boldsymbol\omega\|_2} ,
    \frac{\sqrt{\pi}\,
    \euler^{-2\rho}}{4\rho\operatorname{K}_0(2\rho)}
    \right).
    \label{eq:ns_double_rayleigh_DGD_lower}
\end{align}

Since
\[
    \E[\|\mathbf S_{\mathcal R}\|_2^2\given Z_1Z_2=\rho]
    =
    \rho
    \frac{\|\boldsymbol\omega\|_2^2}{4}
    \frac{\operatorname{K}_1(2\rho)}{\operatorname{K}_0(2\rho)},
\]
Theorem~\ref{th:DGD_upper_bound} gives
\begin{align}
    \frac{\N_{\mathcal R}(u)}{\|\boldsymbol\omega\|_2}
    &\le
    4\rho^2 \operatorname{K}_0(2\rho)
    \sqrt{
        \dfrac{\lambda_\Sigma^2}{\|\boldsymbol\omega\|_2^2}
        +
        \dfrac{\operatorname{K}_1(2\rho)}{2\pi\rho\operatorname{K}_0(2\rho)}
    }.
    \label{eq:ns_double_rayleigh_DGD_upper}
\end{align}

The scale factors affect the LCR through \(\rho(t)\) and the aggregate
drift \(\lambda_\Sigma(t)\). Figure~\ref{fig:double_rayleigh_lcr_bounds}
shows that, for the considered parameters, the DGD bounds are markedly tighter
than the universal bounds.

\begin{figure}
    \centering
    \includegraphics[width=1.0\linewidth]{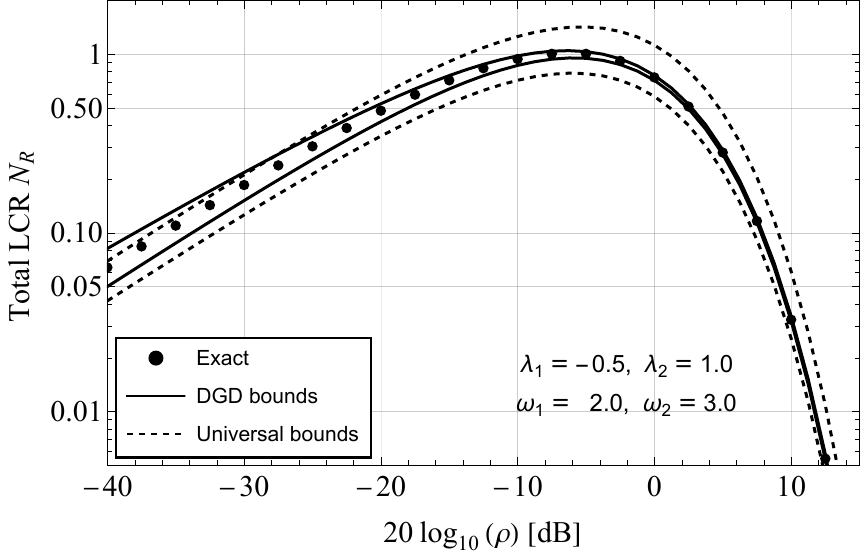}
    \caption{Exact LCR and bounds for Double-Rayleigh fading with
    \(\lambda_1=-0.5\), \(\lambda_2=1\), \(\omega_1=2\), and
    \(\omega_2=3\) (in units of [time]$^{-1}$).}
    \label{fig:double_rayleigh_lcr_bounds}
\end{figure}

A useful way to quantify the tightness of the DGD bounds is through their
relative gap. Denote by \(\mathfrak{L}_{\rm DGD}\) and \(\mathfrak{U}_{\rm DGD}\) the lower and upper bounds in \eqref{eq:ns_double_rayleigh_DGD_lower} and
\eqref{eq:ns_double_rayleigh_DGD_upper}, respectively. Using the positive
homogeneity of \(\Psi\), we obtain,
for every fixed \({\rho>0}\), as \(|\lambda_\Sigma|/\|\boldsymbol\omega\|_2 \to \infty\),
\begin{align}
    \frac{\mathfrak U_{\rm DGD}}{\mathfrak L_{\rm DGD}}
    &=
    1+
    \frac{\operatorname K_1(2\rho)}
    {4\pi\rho\operatorname K_0(2\rho)}
    \frac{\|\boldsymbol\omega\|_2^2}{\lambda_\Sigma^2}
    +
    \mathcal O\!\left(
        \frac{\|\boldsymbol\omega\|_2^4}{\lambda_\Sigma^4}
    \right).
\end{align}
Thus, the DGD bounds are asymptotically tight in the drift-dominated regime, as expected from Remark~\ref{rem:DGD_bounds}.

On the other hand, in the spread-dominated regime, as \(|\lambda_\Sigma|/\|\boldsymbol\omega\|_2 \to 0 \), we have that
\begin{align}
    \frac{\mathfrak U_{\rm DGD}}{\mathfrak L_{\rm DGD}}
    &=
    \frac{2 \euler^{2\rho}}{\sqrt{\pi}}
    \sqrt{\rho \operatorname K_0(2\rho)\operatorname K_1(2\rho)}
    +
    \mathcal O\!\left(
        \frac{\lambda_\Sigma^2}{\|\boldsymbol\omega\|_2^2}
    \right).
\end{align}
Using \cite[Secs. 10.29 and 10.32]{DLMF}, we obtain that
\[
    \frac{2 \euler^{2\rho}}{\sqrt{\pi}}
    \sqrt{\rho \operatorname K_0(2\rho)\operatorname K_1(2\rho)}
    < \sqrt{1 + \frac{1}{8\rho}}.
\]
Thus, the bounds yield tight approximations for large $\rho$ in the spread-dominated regime.

\begin{figure}
    \centering
    \includegraphics[width=1.0\linewidth]{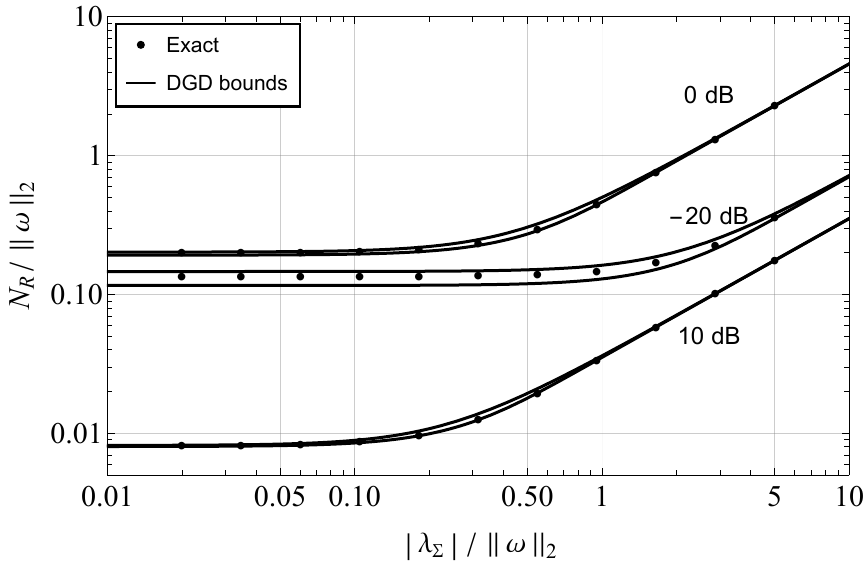}
    \caption{Normalized exact LCR and DGD bounds versus
    \(|\lambda_\Sigma|/\|\boldsymbol\omega\|_2\) for
    \(20\log_{10}\rho=-20,0,10\) dB and the symmetric Doppler split
    \(\omega_1=\omega_2\).} 
    \label{fig:double_rayleigh_lcr_ratio}
\end{figure}

Figure~\ref{fig:double_rayleigh_lcr_ratio} confirms that the exact LCR lies
between the DGD bounds and that the enclosure collapses for fixed \(\rho\) as \(|\lambda_\Sigma|/\|\boldsymbol\omega\|_2\to\infty\) and for large $\rho$ as \(|\lambda_\Sigma|/\|\boldsymbol\omega\|_2\to0\), consistently with the preceding expansions.

\subsection{Quasi-Identically Distributed DGD Products}
\label{subsec:quasi_identical_DGD}

Let
\(\mathcal R_1,\ldots,\mathcal R_n\) be independent DGD processes such that,
at a fixed admissible time \(t\), the envelopes \(R_1,\ldots,R_n\) are
identically distributed with common density \(f_0\). Assume further that their
conditional derivative parameters have the linear-drift common-shape form
\[
    \mu_k(r)=\lambda_k r,
    \qquad
    \sigma_k(r)=\omega_k s_0(r),
    \qquad k=1,\ldots,n,
\]
where \(s_0(r)>0\) is common to all factors, whereas
\(\lambda_k\in\mathbb R\) and \(\omega_k>0\) quantify, respectively, the
drift rate and the angular maximum Doppler shift of the \(k\)th factor.
Let
\[
    \lambda_\Sigma
    \triangleq
    \sum_{k=1}^n\lambda_k,
    \qquad
    \boldsymbol\omega
    \triangleq
    (\omega_1,\ldots,\omega_n).
\]
Since \(R=C_kR_k\), on the product level set \(R=u\) we have
$
    C_kR_k=u,~ k=1,\ldots,n.
$
Consequently, the conditional drift of \(\dot R\) on the level set is
$
    M_{\mathcal R}
    =
    u\lambda_\Sigma.
$
Thus,
\[
    \Delta_{\mathcal R}(u)
    =
    u f_R(u)\lambda_\Sigma .
\]

By exchangeability, define the first-order averaged spread contribution
\[
    \Sigma_0(u)
    \triangleq
    \E\!\left[
        s_0\!\left(\frac{u}{C_1}\right)
        f_0\!\left(\frac{u}{C_1}\right)
    \right].
\]
Then,
\[
    \left\lVert\overline{\boldsymbol\Sigma}_{\mathcal R}(u)\right\rVert_2
    =
    \Sigma_0(u)\|\boldsymbol\omega\|_2.
\]
For the DGD upper bound, define the second-order averaged spread contribution
\begin{align}
    \Theta_0(u)
    &\triangleq
    f_R(u)
    \sqrt{
    \E\!\left[
        C_1^2s_0^2(R_1)
        \given R=u
    \right]
    }\\
    &=
    \sqrt{f_R(u)
    \E\!\left[
        C_1
        s_0^2\!\left(\frac{u}{C_1}\right)
        f_0\!\left(\frac{u}{C_1}\right)
    \right].}
\end{align}
Since
$
    \|\mathbf S_{\mathcal R}\|_2^2
    =
    \sum_{k} C_k^2\sigma_k^2(R_k),
$
exchangeability yields
\[
    f_R^2(u)\E[\|\mathbf S_{\mathcal R}\|_2^2\given R=u]
    =
    \Theta_0^2(u)\|\boldsymbol\omega\|_2^2.
\]

Finally, Theorems~\ref{th:DGD_lower_bound} and~\ref{th:DGD_upper_bound} give,
respectively,
\begin{align}
    \N_{\mathcal R}(u)
    &\ge
    \Psi\!\left(
        u f_R(u)\lambda_\Sigma,
        \Sigma_0(u)\|\boldsymbol\omega\|_2
    \right),
    \label{eq:quasi_DGD_linear_lower_bound}\\
    \N_{\mathcal R}(u)
    &\le
    \sqrt{
        u^2 f_R^2(u)\lambda_\Sigma^2
        +
        \tfrac{2}{\pi}
        \Theta_0^2(u)\|\boldsymbol\omega\|_2^2
    } .
    \label{eq:quasi_DGD_linear_upper_bound}
\end{align}

Thus, drifts combine algebraically through \(\lambda_\Sigma\), while
independent derivative spreads combine through \(\|\boldsymbol\omega\|_2\).

\begin{remark}
The exact LCR generally depends on the full allocation
\(\boldsymbol\omega\), whereas both DGD bounds depend only on
\(\|\boldsymbol\omega\|_2\). Hence, all allocations with the same Euclidean
norm share the same bounds, which are easier to interpret than the original LCR formulation.
\end{remark}

For the relative gap, let
\[
    \gamma
    \triangleq
    \frac{u f_R(u)|\lambda_\Sigma|}
    {\Sigma_0(u)\|\boldsymbol\omega\|_2},
    \qquad
    \eta
    \triangleq
    \frac{\Theta_0(u)}{\Sigma_0(u)}.
\]
By the Cauchy--Schwarz inequality, \(\eta\ge1\). Then
\begin{align}
    \frac{\mathfrak{U}_{\rm DGD}}{\mathfrak{L}_{\rm DGD}}
    &=
    \frac{
        \sqrt{
            \gamma^2+\frac{2}{\pi}\eta^2
        }
    }{
        \Psi(\gamma,1)
    }
    \label{eq:quasi_DGD_linear_gap}\\
    &=
    1+\frac{\eta^2}{\pi\gamma^2}
    +
    \mathcal{O}(\gamma^{-4}),
\end{align}
as \(\gamma\to\infty\). As \(\gamma\to0\),
\[
    \frac{\mathfrak{U}_{\rm DGD}}{\mathfrak{L}_{\rm DGD}}
    =
    \eta
    \left[
        1+
        \left(
            \frac{\pi}{4\eta^2}
            -
            \frac{1}{2}
        \right)\gamma^2
        +
        \mathcal{O}(\gamma^4)
    \right].
\]
Hence the gap tends to \(1\) in the drift-dominated regime and to
\(\eta\) in the spread-dominated regime. If \(s_0(r)\propto r\), then
\(\eta=1\).

\begin{table}[t]
\renewcommand{\arraystretch}{1.65}
\centering
\caption{Selected \(\alpha\)-\(\mu\) special cases.}
\label{tab:a-m_special_cases}
\begin{tabular}{|c|c|c|c|}\hline
Model & Parameters & \(f_0(r)\) & \(s_0(r)\) \\\hline
\(\alpha\)-\(\mu\) & \(\alpha,\mu>0\) &
\(\frac{\alpha\mu^\mu}{\Gamma(\mu)}r^{\alpha\mu-1}e^{-\mu r^\alpha}\) &
\(\frac{r^{1-\alpha/2}}{\alpha\sqrt\mu}\) \\\hline
Rayleigh & \(\alpha=2,\mu=1\) & \(2re^{-r^2}\) & \(1/2\) \\\hline
Nakagami-\(m\) & \(\alpha=2,\mu=m\) &
\(\frac{2m^m}{\Gamma(m)}r^{2m-1}e^{-mr^2}\) & \(1/(2\sqrt m)\) \\\hline
Weibull & \(\mu=1\) & \(\alpha r^{\alpha-1}e^{-r^\alpha}\) &
\(r^{1-\alpha/2}/\alpha\) \\\hline
\end{tabular}
\end{table}

\begin{figure*}[ht]
    \centering
    \subfloat[\(\alpha=2.2\), \(\mu=1.8\), and
    \(\lambda_k/\omega_k=0.25\).]{
        \includegraphics[width=0.465\textwidth]
        {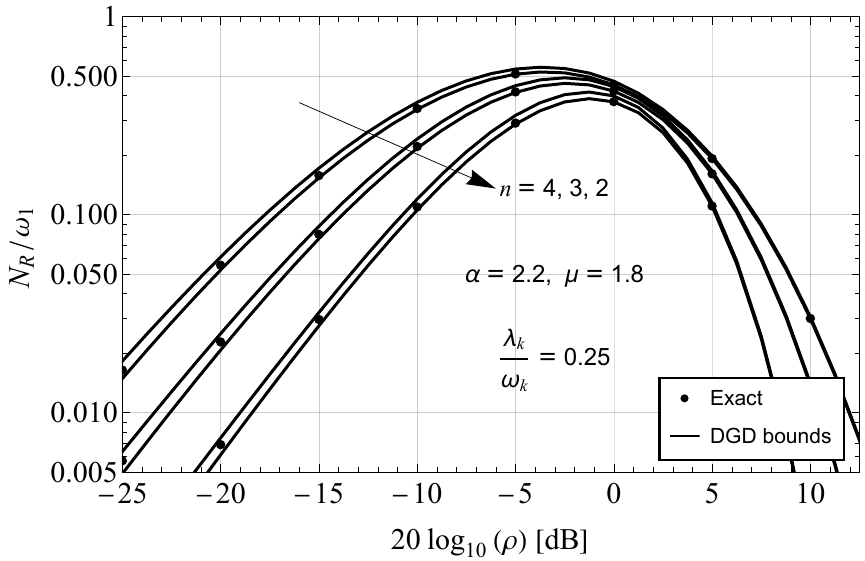}
        \label{fig:alpha_mu_DGD_bounds_weak_drift}
    }
    \hfill
    \subfloat[\(\alpha=1.5\), \(\mu=2.5\), and
    \(\lambda_k/\omega_k=0.5\).]{
        \includegraphics[width=0.465\textwidth]
        {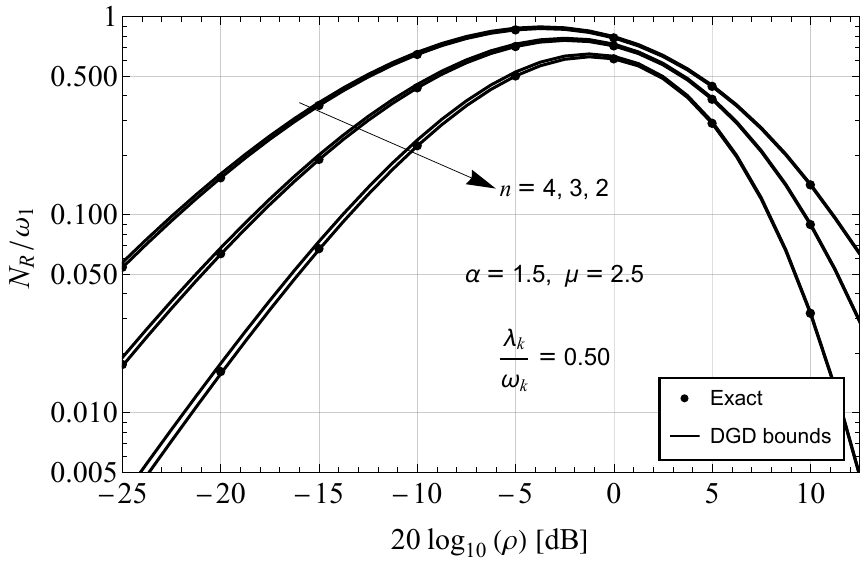}
        \label{fig:alpha_mu_DGD_bounds_strong_drift}
    }
    \caption{Exact LCR and DGD bounds for products of \(n=2,3,4\)
    quasi-identically distributed \(\alpha\)-\(\mu\) factors with equal
    drifts and Doppler shifts.}
    \label{fig:alpha_mu_DGD_bounds_n234}
\end{figure*}

For normalized \(\alpha\)-\(\mu\) fading \cite{yacoub2007alpha},
Table~\ref{tab:a-m_special_cases} lists representative special cases, and
\begin{align}
    f_0(r)
    &=
    \frac{\alpha\mu^\mu}{\Gamma(\mu)}
    r^{\alpha\mu-1}
    \euler^{-\mu r^\alpha},\\
    s_0(r)
    &=
    \frac{r^{1-\alpha/2}}{\alpha\sqrt\mu}.
\end{align}
Here \(\omega_k\) is the angular maximum Doppler shift. Define
\[
    \mathcal G_n(z;b_1,b_2)
    \triangleq
    \operatorname{G}_{0,n}^{n,0}
    \!\left(
    z\;\middle|\;
    \begin{matrix}
    -\\
    b_1,\underbrace{b_2,\ldots,b_2}_{n-1}
    \end{matrix}
    \right),
\]
where \(\operatorname{G}\) is the Meijer \(G\)-function
\cite[Eq.~(5.3.1)]{bateman1953higher}. Then, the closed forms needed for the DGD bounds in
\eqref{eq:quasi_DGD_linear_lower_bound} and
\eqref{eq:quasi_DGD_linear_upper_bound} 
are
\begin{align}
    f_R(u)
    &=
    \frac{\alpha\mu^n u^{\alpha-1}}{\Gamma^n(\mu)}
    \mathcal G_n(\mu^n u^\alpha;\mu-1,\mu-1),
    \label{eq:fR_a-m}\\
    \Sigma_0(u)
    &=
    \frac{1}{\Gamma^n(\mu)}
    \mathcal G_n\!\left(
        \mu^n u^\alpha;
        \mu-\frac12,
        \mu
    \right),
    \label{eq:Sigma_a-m}\\
    \Theta_0^2(u)
    &=
    f_R(u)
    \frac{
        \mu^{-n/\alpha}
    }{
        \alpha\Gamma^n(\mu)
    }
    \mathcal G_n\!\left(
        \mu^n u^\alpha;
        \mu+\frac{1 - \alpha}{\alpha},
        \mu+\frac{1}{\alpha}
    \right).
    \label{eq:Theta_a-m}
\end{align}
The derivations are detailed in Appendix~\ref{app:alpha_mu_quasi_DGD}.
The corresponding special cases follow by substituting \(\alpha\) and
\(\mu\) values of Table~\ref{tab:a-m_special_cases}.

For \(n=2\), these formulas simplify to
\begin{align}
    f_R(u)
    &=
    \frac{2\alpha\mu^{2\mu}}{\Gamma^2(\mu)}
    u^{\alpha\mu-1}\operatorname{K}_0(2\mu u^{\alpha/2}),\\
    \Sigma_0(u)
    &=
    \frac{\sqrt{\pi}\mu^{2\mu-1}}{\Gamma^2(\mu)}
    u^{\alpha\mu-\alpha/2}\euler^{-2\mu u^{\alpha/2}},\\
    \Theta_0^2(u)
    &=
    \frac{4\mu^{4\mu-1}}{\Gamma^4(\mu)}
    u^{2\alpha\mu-\alpha/2}
    \operatorname{K}_0(2\mu u^{\alpha/2})
    \operatorname{K}_1(2\mu u^{\alpha/2}).
\end{align}

Figure~\ref{fig:alpha_mu_DGD_bounds_n234} confirms the enclosure for
\(n=2,3,4\). With equal factor parameters,
\(|\lambda_\Sigma|/\|\boldsymbol\omega\|_2=\sqrt n\,
|\lambda_1|/\omega_1\). Indeed, the larger-drift case exhibits the tighter bounds
predicted by the asymptotic analysis.

\section{Conclusion}
\label{sec:conclusion}


This paper investigated the level crossing rate of products of smooth
nonstationary stochastic processes. First, universal bounds were derived for
positive Rice-regular factors, requiring only the individual LCRs and signed
level-crossing imbalances. These bounds provide model-independent guarantees
for product processes and clarify how each factor contributes to the total
crossing rate.

A sharper analysis was then developed for the class of DGD processes. For products of independent DGD factors,
an exact conditional representation of the product derivative was obtained,
leading to an exact integral expression for the product LCR. From this
representation, tractable lower and upper bounds were derived. The lower
bound depends only on the signed product imbalance
\(\Delta_{\mathcal R}\) and on the Euclidean norm of the averaged
derivative-spread vector, whereas the upper bound follows from a quadratic
folded-normal inequality. For the considered numerical examples, the DGD bounds provide a
substantially tighter enclosure than the universal bounds.
The DGD bounds can provide a substantially tighter enclosure, as confirmed by
the considered numerical examples.

The applications illustrated the usefulness of the proposed framework. In the nonstationary Double-Rayleigh model, deterministic scale variations induce a nonzero drift at the product level, even when the underlying Rayleigh envelopes are stationary. The DGD bounds closely bracket the exact LCR and become asymptotically tight in drift-dominated regimes. For quasi-identically distributed \(\alpha\)-\(\mu\) products with angular maximum Doppler shifts, the bounds reveal a simple decomposition: drift contributions combine coherently through the aggregate drift, while independent derivative spreads combine through the Euclidean Doppler norm. Closed forms were obtained for the quantities needed to evaluate these bounds, including several classical fading models as special cases.

Overall, the results provide both rigorous guarantees and practical
approximations for LCR analysis in nonstationary product-fading channels.
Possible extensions include ratio or sum of processes, correlated factors, and
specific non-Gaussian conditional derivative models.

\appendix

\subsection{Proof of Lemma~\ref{lem:Psi_upper_bound}}
\label{app:proof_upper_bound_Psi}

\begin{proof}
By evenness and positive homogeneity of $\Psi(a,b)$, it is enough to consider
\(b=1\) and \(a=x\ge0\). Let
\[
    m(x)\triangleq \Psi(x,1) = 2\phi(x)+x(2\Phi(x)-1).
\]
Then, its derivative $m'(x) = 2\Phi(x)-1$ and we can rewrite
\[
    m(x)=2\phi(x)+xm'(x).
\]
Define
\[
    g(x)\triangleq x^2+\frac{2}{\pi}-m^2(x).
\]
Since \(m(0)=\sqrt{2/\pi}\), we have \(g(0)=0\). Moreover,
\begin{align}
    g'(x)
    &=
    2(x-m(x)m'(x)) \nonumber\\
    &=
    2\left(
    x(1-(m'(x))^2)-2\phi(x)m'(x)
    \right)\nonumber\\
    &= 
    4\left(
    2x\Phi(x)\Phi^c(x)-\phi(x)m'(x)
    \right), \label{eq:g'}
\end{align}
where \(\Phi^c(x)\triangleq1-\Phi(x)\). Define the inverse Mills
ratio
\[
    v(x)\triangleq \frac{\phi(x)}{\Phi^c(x)} .
\]
Sampford's inequality \cite{sampford1953mills} gives \(0<v'(x)<1\) for every \(x\). Hence, for
\(x\ge0\),
\[
    v(x)-v(-x)
    =
    \int_{-x}^{x}v'(t)\,\d t
    \le
    2x .
\]
Since \(\phi(-x)=\phi(x)\) and \(\Phi^c(-x)=\Phi(x)\),
\[
    v(x)-v(-x)
    =
    \frac{\phi(x)}{\Phi^c(x)}
    -
    \frac{\phi(x)}{\Phi(x)}
    =
    \frac{\phi(x)m'(x)}
    {\Phi(x)\Phi^c(x)} .
\]
Therefore,
\[
    \phi(x)m'(x)
    \le
    2x \Phi(x)\Phi^c(x).
\]
Thus, from \eqref{eq:g'}, \(g'(x)\ge0\) for \(x\ge0\). Since \(g(0)=0\), it follows that
\(g(x)\ge0\), i.e.,
\[
    \Psi^2(x,1)
    \le
    x^2+\frac{2}{\pi},
\]
which completes the proof.
\end{proof}

\subsection{Proof of the closed form for $\overline{\N}_k$ in \eqref{eq:Nbar_closed}}
\label{app:Nk}

\begin{proof}
Let
$
    \overline{\N}_k(u)=I_1+I_2,
$
where, omitting $k$,
\begin{align}
    I_1
    &=
    2\rho\sqrt{\frac{2}{\pi}}\omega
    \int_0^\infty
    \exp\!\left[
    -x^2
    -
    \frac{\rho^2}{x^2}
    \left(
    1+\frac{2\lambda^2}{\omega^2}
    \right)
    \right]\d x,\\
    I_2
    &=
    4\lambda\rho^2
    \underbrace{\int_0^\infty
    \frac{1}{x}
    \erf\!\left(
    \frac{\sqrt{2}\lambda\rho}{\omega x}
    \right)
    \exp\!\left(
    -x^2-\frac{\rho^2}{x^2}
    \right)\d x}_{\triangleq J(\lambda)},
\end{align}
and $\erf(x)\triangleq 2\Phi(\sqrt{2}x)-1$. By \cite[Eq.~(3.478.4)]{GradshteynRyzhik2007},
\[
    I_1
    =
    \sqrt{2}\,\omega\rho
    \exp\!\left[
    -2\rho
    \sqrt{
    1+\frac{2\lambda^2}{\omega^2}
    }
    \right].
\]
For the second term, differentiating under the integral sign:
\[
    J'(\lambda)
    =
    \frac{
    \exp\!\left[
    -2\rho
    \sqrt{
    1+2\lambda^2/\omega^2
    }
    \right]
    }{
    \sqrt{(\omega^2/2)+\lambda^2}
    }.
\]
Since $J(0)=0$, using the change of variable
$
    \sqrt{2}\lambda=\omega\sinh y
$
yields
\[
    J(\lambda)
    =
    \int_0^{\operatorname{arsinh}(\sqrt{2}\lambda/\omega)}
    e^{-2\rho\cosh y}\,\d y .
\]
Therefore,
\[
    I_2
    =
    4|\lambda|\rho^2
    \operatorname{K}_0^{\rm L}\!\left(
    2\rho,
    \operatorname{arsinh}
    \frac{\sqrt{2}|\lambda|}{\omega}
    \right),
\]
which proves \eqref{eq:Nbar_closed}.
\end{proof}

\subsection{Proof of closed forms \eqref{eq:fR_a-m}, \eqref{eq:Sigma_a-m} and \eqref{eq:Theta_a-m}}
\label{app:alpha_mu_quasi_DGD}

\begin{proof}
Define
$
    T_i\triangleq \mu R_i^\alpha,~i=1,\ldots,n.
$
Then \(T_1,\ldots,T_n\) are i.i.d. gamma random variables with shape \(\mu\)
and unit rate, i.e.,
\[
    f_T(t)=\frac{1}{\Gamma(\mu)}t^{\mu-1}\euler^{-t},
    \qquad t>0.
\]
We shall also use
$
    z\triangleq \mu^n u^\alpha.
$
We use the standard Mellin-transform identity for products of independent
gamma random variables \cite{springer1970products}, that is,
\[
    f_{T_1\cdots T_n}(w)
    =
    \frac{1}{\Gamma^n(\mu)}
    G_{0,n}^{n,0}
    \!\left(
    w\;\middle|\;
    \begin{matrix}
    -\\
    \underbrace{\mu-1,\ldots,\mu-1}_{n}
    \end{matrix}
    \right).
\]
Since
\(\prod_i T_i=\mu^n R^\alpha\), then
\[
    f_R(u)
    =
    \alpha\mu^n u^{\alpha-1}
    f_{\prod_i T_i}(\mu^n u^\alpha),
\]
which gives
\[
    f_R(u)
    =
    \frac{\alpha\mu^n u^{\alpha-1}}{\Gamma^n(\mu)}
    \mathcal G_n(\mu^n u^\alpha;\mu-1,\mu-1).
\]

We now derive \(\Sigma_0(u)\). Let
$
    W_1\triangleq \prod_{i=2}^n T_i.
$
Then,
\[
    s_0\!\left(\frac{u}{C_1}\right)
    f_0\!\left(\frac{u}{C_1}\right)
    =
    \frac{
        z^{\mu-1/2}
    }{
        \Gamma(\mu)
    }
    W_1^{-(\mu-1/2)}
    \euler^{-z/W_1}.
\]
Taking expectation over \(T_2,\ldots,T_n\) gives
\begin{align*}
    \Sigma_0(u)
    &=
    \frac{z^{\mu-1/2}}{\Gamma^n(\mu)}
    \int_{\mathbb R_+^{n-1}}
    \exp\!\left(
        -\sum_{i=2}^n t_i
        -\frac{z}{\prod_{i=2}^n t_i}
    \right)
    \prod_{i=2}^n t_i^{-1/2}\,\d \mathbf t .
\end{align*}
By the integral representation of the Meijer G-function \cite[Section~5.3]{bateman1953higher}, this equals
\[
    \Sigma_0(u)
    =
    \frac{1}{\Gamma^n(\mu)}
    G_{0,n}^{n,0}
    \!\left(
        z
        \;\middle|\;
        \begin{matrix}
        -\\
        \mu-\frac12,\underbrace{\mu,\ldots,\mu}_{n-1}
        \end{matrix}
    \right).
\]

It remains to derive \(\Theta_0^2(u)\). 
Since
\[
    C_1
    s_0^2\!\left(\frac{u}{C_1}\right)
    f_0\!\left(\frac{u}{C_1}\right)
    =
    \frac{
        \mu^{-n/\alpha}
    }{
        \alpha\Gamma(\mu)
    }
    z^{\mu+1/\alpha-1}
    W_1^{1-\mu}
    \euler^{-z/W_1}
\]
then
\begin{align*}
    &\E\!\left[
        C_1s_0^2\!\left(\frac{u}{C_1}\right)
        f_0\!\left(\frac{u}{C_1}\right)
    \right]\\
    &\quad =
    \frac{\mu^{-n/\alpha}z^{\mu+1/\alpha-1}}
    {\alpha\Gamma^n(\mu)}
    \int_{\mathbb R_+^{n-1}}
    \exp\!\left(
        -\sum_{i=2}^n t_i
        -\frac{z}{\prod_{i=2}^n t_i}
    \right) \d \mathbf t.
\end{align*}
Again using the Meijer G-function integral representation,
\begin{align*}
    &\E\!\left[
        C_1s_0^2\!\left(\frac{u}{C_1}\right)
        f_0\!\left(\frac{u}{C_1}\right)
    \right]
    =
    \frac{\mu^{-n/\alpha}}{\alpha\Gamma^n(\mu)}
    \mathcal G_n\!\left(
        z;
        \mu+\frac{1}{\alpha}-1,
        \mu+\frac{1}{\alpha}
    \right).
\end{align*}
Multiplying by \(f_R(u)\) 
completes the proof.
\end{proof}





\bibliographystyle{IEEEtran}
\bibliography{ref} 

\end{document}